\documentclass[10pt,journal]{IEEEtran}

\usepackage{amsmath,amssymb,amsthm,mathtools,bm}
\usepackage{cite}
\usepackage[hidelinks]{hyperref}
\usepackage{microtype}
\usepackage{graphicx}
\usepackage{dsfont}
\usepackage{tikz}
\usepackage{array,tabularx,booktabs}
\usetikzlibrary{arrows.meta,positioning,calc,fit}
\allowdisplaybreaks

\newcommand{\E}{\mathbb{E}}

\newcommand{\Prb}{\mathbb{P}}

\newcolumntype{L}[1]{>{\raggedright\arraybackslash}p{#1}}
\newcolumntype{Y}{>{\raggedright\arraybackslash}X}

\newtheorem{proposition}{Proposition}
\newtheorem{lemma}{Lemma}
\newtheorem{theorem}{Theorem}
\newtheorem{remark}{Remark}

\newcommand{\MainSecNum}{Section~IV}
\newcommand{\MainThmOne}{Theorem~1}
\newcommand{\MainLemmaOne}{Lemma~1}
\newcommand{\MainThmTwo}{Theorem~2}
\newcommand{\MainPropOne}{Proposition~1}
\newcommand{\MainEqSIR}{equation~(3)}
\newcommand{\MainEqPsucc}{equation~(9)}
\newcommand{\MainEqIdeal}{equation~(13)}
\newcommand{\MainEqQlb}{equations~(14)--(15)}
\newcommand{\MainEqContraction}{equations~(16)--(17)}
\newcommand{\MainEqPstar}{equations~(20)--(21)}

\title{Wireless Broadcast Gossip for Decentralized Drone Swarms: Success Probability, Contraction, and Optimal Aloha}
\author{Ali Khalesi%
\thanks{A.~Khalesi is an \textit{Assistant Professor} at Institut Polytechnique des Sciences Avanc\'ees (IPSA) and LINCS Lab, Paris, France (ali.khalesi@ipsa.fr).}
}

\begin{document}

\maketitle

\begin{abstract}
We study a tractable baseline for average-preserving broadcast gossip in decentralized drone swarms under a quasi-static planar Poisson model and a matching-based abstraction. With slotted Aloha, Rayleigh fading, and threshold decoding, we derive: 1) a closed-form SIR success law; 2) a mean-square contraction bound that separates ideal mixing from wireless successful updates via a conservative lower bound; and 3) a closed-form proxy access rule with interpretable density scaling. Explicit-interference simulations, together with robustness checks for receiver selection, noise, fading, and spatial regularity, confirm a stable intermediate operating region for the Aloha probability.
\end{abstract}

\begin{IEEEkeywords}
Broadcast gossip, average consensus, UAV swarms, stochastic geometry, slotted Aloha, interference-limited networks.
\end{IEEEkeywords}

% =========================
\section{Introduction}
\label{sec:intro}
% =========================

Drone swarms increasingly perform coordination tasks without a central
controller: maintaining formations, aligning headings, agreeing on shared
reference frames, and fusing local measurements to build a common estimate.
In many practical settings, the only scalable communication primitive is
short-range \emph{broadcast}: each drone periodically sends a small message
that nearby drones may receive. Because these interactions are local and do not
require routing or a leader, \emph{consensus} methods based on pairwise
averaging, often referred to as \emph{gossip}
algorithms,\footnote{A gossip algorithm is an iterative scheme in which nodes
repeatedly combine their values using only local exchanges; in its simplest
form, two nodes replace their values by their average, which preserves the
global average and gradually removes disagreements.}
are a natural fit for swarm coordination
\cite{BoydGossip2006,OlfatiSaber2007,RenBeard2005}.

A key difficulty is that swarm communication does not occur over reliable
links. Broadcasts share a common wireless channel, so simultaneous
transmissions interfere with each other. As a result, whether a packet is
decoded is random and depends on how many other drones transmit at the same
time and how close they are. Hence, the speed of a gossip loop is determined
not only by \emph{who could talk to whom} through the neighborhood graph, but
also by \emph{how often messages are successfully received} under interference.
Capturing this interaction between wireless reliability and averaging dynamics
is the main goal of this letter. While we use a standard threshold-based
decoding model, the focus is not on detailed physical-layer design. Rather, we
seek a compact analytical relation between medium-access parameters and the
speed of average-preserving consensus.

The literature most closely related to this paper spans several directions.
Classical randomized gossip and consensus works establish convergence and mixing
properties for local averaging over static or time-varying graphs
\cite{BoydGossip2006,OlfatiSaber2007,RenBeard2005}. Consensus over random or
unreliable networks has also been studied through random graph evolution, link
failures, and noisy communication models
\cite{HatanoMesbahi2005,TahbazSalehiJadbabaie2008,TahbazSalehiJadbabaie2010,KarMouraRandomLinks2008,KarMouraImperfect2009,PorfiriStilwell2007,FagnaniZampieri2008}.
At a broader level, surveys such as \cite{DimakisGossip2010} emphasize that
gossip-type algorithms are especially attractive in wireless settings because
they avoid specialized routing and remain compatible with unreliable links and
distributed processing constraints.

A second line of work is closer in spirit to wireless broadcast operation.
Broadcast gossip algorithms have been developed precisely to exploit the
broadcast nature of wireless exchanges rather than only pairwise routed links
\cite{AysalBroadcastConsensus2009,FranceschelliBroadcast2011}. Related wireless
consensus and synchronization protocols have also been studied in more
communication-aware settings, for example under multipath fading or in
practical wireless sensor protocols
\cite{ScutariBarbarossa2008,SchenatoFiorentin2011}. These works show that the
wireless medium can materially affect convergence behavior, but they typically
do not lead to a compact interference-driven MAC tuning rule of the type
considered here.

Our viewpoint sits at the interface between these literatures. On one hand,
consensus over random or unreliable networks is often analyzed through abstract
graph processes, link failures, or noisy exchanges without an explicit
medium-access and interference model. On the other hand, stochastic-geometry
analyses of wireless networks provide explicit reception laws under random
access, but typically for coverage, throughput, or connectivity metrics rather
than disagreement contraction. The present contribution combines these
viewpoints: it starts from an explicit wireless interference model, imposes an
average-preserving gossip abstraction, and derives a compact MAC-level
operating rule for the resulting interference--mixing tradeoff.

A standard way to obtain tractable yet informative reliability laws is
\emph{stochastic geometry},\footnote{Stochastic geometry models node locations
as random points in space. This makes it possible to compute network-wide
quantities, such as the probability of successful reception, by averaging over
the random geometry.}
and, in particular, the \emph{Poisson point process (PPP)}
baseline.\footnote{A PPP is a random point pattern in which the number of
points in disjoint regions are independent and Poisson distributed. Its single
parameter is the spatial density $\lambda$ (points per unit area). It is widely
used as a tractable baseline for irregular deployments.}
Under Rayleigh fading\footnote{Rayleigh fading models rapid small-scale channel
fluctuations; under this model the received power gain is exponential.}
and power-law path loss, PPP-based models yield closed-form expressions for the
probability that a transmission meets a target
\emph{signal-to-interference ratio (SIR)}.\footnote{SIR is the received signal
power divided by the sum of interference powers from other transmitters. In
this paper, the threshold $\theta$ is interpreted as an effective PHY operating
requirement that encapsulates modulation, coding, and receiver operating
point.}
These tools were developed extensively in the Poisson-network literature; see,
e.g., the monographs \cite{BaccelliBlasz2009v1,BaccelliBlasz2010v2} and
studies of Aloha-type random access in Poisson networks
\cite{BaccelliTIT2006Aloha,BaccelliJSAC2009Aloha,BaccelliSinghWiOpt2013}. The
PPP assumption should therefore be interpreted here as a tractable
irregular-deployment baseline rather than as a claim that real swarms literally
form a Poisson pattern. More structured formations, such as hard-core or
perturbed lattice layouts, would modify the interference statistics and hence
the constants in the resulting design rule, but the PPP baseline cleanly
exposes the density--interference tradeoff that we wish to analyze.

We adopt a quasi-static snapshot model appropriate for short control loops.
Drone locations are modeled as a planar PPP of intensity $\lambda$, and time is
slotted. In each slot, each drone transmits with probability $p$ using
\emph{slotted Aloha}.\footnote{Slotted Aloha is a random-access method where
each node independently decides to transmit in each time slot with probability
$p$.}
The geometry is treated as approximately constant over the short time window
used to characterize one-step and short-horizon contraction; this approximation
is most appropriate when the typical displacement per slot satisfies
$v\Delta t \ll R$, where $R$ is the communication radius. For example, when
\(R\) is on the order of tens of meters and the slot duration is in the
millisecond range, the resulting per-slot displacement can remain small enough
for the quasi-static approximation to be meaningful over short control windows.
Each listening drone considers neighbors within range $R$ and attempts to
decode one candidate broadcast. Successful decoding is determined by an
interference-limited SIR rule, which is most relevant in dense short-range
regimes where concurrent transmissions dominate receiver noise; in weaker
interference regimes, a SINR model would be more appropriate. Likewise,
Rayleigh fading is adopted as a tractable baseline and can be viewed as a
conservative proxy relative to more LoS-dominated UAV channels.

Because successful broadcast reception is inherently one-way, we introduce a
\emph{matching-based analytical abstraction} to obtain an average-preserving
update rule. Specifically, each slot is mapped to a set of symmetric pairwise
exchanges forming a matching, so that each executed exchange becomes a pairwise
averaging step and the resulting update matrix is doubly stochastic. This
abstraction deliberately underestimates the full richness of broadcast
reception, since a transmission may in practice be decodable by multiple
receivers, but it isolates the interaction between wireless success events and
average-preserving gossip in a mathematically clean way. It should therefore be
read as an analysis layer rather than as a complete protocol specification:
explicit handshake, contention resolution, or acknowledgment overhead is not
modeled here and would effectively reduce the realized successful-update rate.
Similarly, richer receiver behaviors such as multi-receiver capture or
strongest-signal selection may improve spreading in practice, but would move
the state evolution away from the present pairwise average-preserving baseline.

\textbf{Design question.}
Given the swarm density $\lambda$, neighborhood radius $R$, and decoding
threshold $\theta$, what transmit probability $p$ yields the \emph{fastest}
convergence? When $p$ is too small, few updates occur because the medium is
mostly idle. When $p$ is too large, many broadcasts collide through
interference and reception fails. The main message of this letter is that this
tradeoff can be converted into a simple and interpretable access guideline.

\textbf{Contributions.}
Under the interference-limited PPP+Aloha baseline model of
Section~\ref{sec:model}, we derive:
\begin{itemize}
\item \textbf{Closed-form reception probability:} an explicit expression for
the SIR success probability as a function of
$(\lambda,p,r,\theta,\alpha)$ (Theorem~\ref{thm:psucc}).
\item \textbf{Convergence with wireless thinning:} a mean-square contraction
bound in which the consensus rate factorizes into an \emph{ideal} mixing term
and an explicit \emph{wireless} success term through a conservative per-slot
successful-update lower bound (Theorem~\ref{thm:contraction_clean}).
\item \textbf{Proxy-based optimal access:} a closed-form access rule
$p^\star$ for a tractable availability--reliability proxy, revealing the
scaling $p^\star=\Theta(1/\lambda)$ for fixed $(R,\theta,\alpha)$
(Proposition~\ref{prop:pstar_clean}).
\end{itemize}

\textbf{Interpretation and scope.}
The present letter deliberately focuses on a tractable baseline consisting of a
quasi-static PPP geometry, an interference-limited threshold model, Rayleigh
fading, and a matching-based average-preserving abstraction. It does not
attempt to provide a full protocol-level treatment of control signaling,
mobility-induced temporal correlation, LoS-specific propagation, or structured
swarm formations. Accordingly, the resulting guideline should be interpreted as
a density-adaptive MAC heuristic under the adopted model assumptions, rather
than as a universal exact optimizer for all UAV communication settings.

\textbf{Practical implication for drone swarms.}
Within this baseline, the resulting message is simple: denser deployments
should generally use a lower randomized broadcast duty cycle to avoid
interference-driven reception failures. Numerical evidence supporting this rule
is summarized in Section~\ref{sec:sim}, while full experimental details are
deferred to the supplementary material.

\textbf{Organization.}
Section~\ref{sec:model} presents the system model and the average-preserving
broadcast-gossip abstraction. Section~\ref{sec:results} states the main
analytical results: the PPP+Aloha SIR success probability, the wireless-thinned
contraction bound, and the proxy-based access rule \(p^\star\).
Section~\ref{sec:sim} gives a compact numerical-validation summary.
Section~\ref{sec:conclusion} concludes the letter and outlines directions for
future work.

\section{System Model}
\label{sec:model}

\textbf{Geometry and state model:}
Let \(\Phi\subset\mathbb{R}^2\) be a stationary planar PPP of intensity
\(\lambda\). We study the swarm inside a large observation window
\(\mathcal{W}\subset\mathbb{R}^2\) and denote the drone set by
\(\mathcal{V}\triangleq \Phi\cap\mathcal{W}\), with
\(n\triangleq|\mathcal{V}|\). Condition on \(n\ge 2\), and index drones by
\(i\in\{1,\dots,n\}\) with locations \(X_i\in\mathcal{W}\). Each drone
maintains a scalar state \(x_i(t)\in\mathbb{R}\), for example a heading
component or formation offset. Let \(x(t)\in\mathbb{R}^n\) collect all states
and define the target average
\begin{equation}
\bar x \triangleq \frac{1}{n}\sum_{i=1}^n x_i(0),
\qquad
\text{goal: } x_i(t)\to \bar x \text{ for all } i.
\end{equation}
The PPP assumption is used here as a tractable baseline
for irregular deployments rather than as a literal model
of formation-controlled swarms. Likewise, the large-window
formulation neglects finite-area boundary effects.

\textbf{Broadcast neighborhood and slotted access:}
Time is slotted. In each slot, drone \(i\) transmits independently with
probability \(p\) according to slotted Aloha. Let
\(\mathcal{T}(t)\subseteq\mathcal{V}\) denote the set of transmitters at slot
\(t\). Under the PPP baseline, \(\mathcal{T}(t)\) is a \(p\)-thinning of
\(\mathcal{V}\). Drones are half-duplex, so a transmitting node cannot receive
in the same slot. Each drone \(i\) considers neighbors within range \(R\):
\begin{equation}
\mathcal{N}_i \triangleq
\{j\in\{1,\dots,n\}\setminus\{i\}:\|X_j-X_i\|\le R\}.
\end{equation}
The analysis adopts a quasi-static snapshot viewpoint: geometry is treated as
approximately constant over the short time window used to characterize one-step
and short-horizon contraction. This is most appropriate when the typical
displacement per slot satisfies
\[
v\Delta t \ll R,
\]
where \(v\) is a representative speed and \(\Delta t\) is the slot duration.
For example, if \(R\) is on the order of tens of meters and \(\Delta t\) is in
the millisecond range, then the per-slot displacement can remain small enough
for the geometry to be effectively frozen over short control windows.
Accordingly, the present model is intended for short-horizon contraction rather
than long-horizon mobility-dominated evolution. For a horizon of \(T\) slots, a
conservative order-of-magnitude validity condition is
\[
T v \Delta t \ll R,
\]
or, more generally, that the neighbor set and the dominant
interferers do not change drastically over the contraction
window. Longer horizons may therefore be interpreted as a
sequence of piecewise quasi-static blocks, to which the present
analysis applies locally rather than globally over the entire run.

\textbf{Wireless channel and threshold decoding:}
Path-loss is \(\ell(r)=r^{-\alpha}\) with \(\alpha>2\), and fading is
Rayleigh, so the power gain \(h_{ji}(t)\sim\mathrm{Exp}(1)\) is i.i.d.\ across
links and slots. Conditioned on the active transmitters \(\mathcal{T}(t)\),
the SIR at receiver \(i\) for transmitter \(j\in\mathcal{T}(t)\) is
\begin{equation}
\label{eq:sir_def}
\mathrm{SIR}_{j\to i}(t)
\triangleq
\frac{h_{ji}(t)\,\|X_j-X_i\|^{-\alpha}}
{\sum_{k\in\mathcal{T}(t)\setminus\{j\}} h_{ki}(t)\,\|X_k-X_i\|^{-\alpha}}.
\end{equation}
A packet is declared decodable if \(\mathrm{SIR}_{j\to i}(t)>\theta\) for a
fixed threshold \(\theta>0\). This interference-limited formulation is
meant for dense short-range regimes in which concurrent transmissions
dominate the noise term over the time scale of interest. In weaker-interference
regimes, a SINR model would be more appropriate. The threshold \(\theta\)
is interpreted as an effective PHY operating requirement that encapsulates
modulation, coding, and receiver operating point. Rayleigh fading is adopted
as a tractable baseline and may be viewed as a conservative proxy relative
to more LoS-dominated UAV channels. A schematic illustration of the model
and an illustrative PHY/MAC parameter table are provided in the
supplementary material.

\textbf{Receiver and update abstraction:}
Because transmissions are broadcast, a listening drone may in principle decode
one or more nearby transmitters. To obtain an analytically clean
average-preserving iteration, we adopt a conservative receiver/update
abstraction. Specifically, each listening drone
\(i\notin\mathcal{T}(t)\) selects one candidate transmitting neighbor in
\(\mathcal{N}_i\cap\mathcal{T}(t)\), for example uniformly when multiple
candidates are present, and attempts to decode only that selected link. This
choice underestimates the full richness of broadcast reception, but it keeps
the state-update rule pairwise and interpretable, and it preserves a
doubly-stochastic average-preserving structure once combined with the matching
step below. More expressive receiver models, such as strongest-signal
selection, multi-receiver decoding, or capture-based rules, may improve
information spreading in practice, but they would also modify the present
pairwise baseline and are therefore left outside the scope of the analysis.

To preserve the global average, successful receptions are then mapped to a set
of executed exchanges forming a matching \(\mathcal{M}(t)\) on the node set
\(\{1,\dots,n\}\). Each matched pair \((i,j)\in\mathcal{M}(t)\) performs
pairwise averaging:
\begin{equation}
x_i(t{+}1)=x_j(t{+}1)=\tfrac12\bigl(x_i(t)+x_j(t)\bigr),
\end{equation}
and all unmatched nodes keep their states. Equivalently,
\begin{equation}
x(t{+}1)=W(t)\,x(t),
\end{equation}
where \(W(t)\) is random, symmetric, and doubly stochastic. A convenient
representation is
\begin{equation}
\label{eq:W_matching}
W(t)=I-\frac12\sum_{(i,j)\in\mathcal{M}(t)}(e_i-e_j)(e_i-e_j)^\top,
\end{equation}
with \(\{e_i\}\) the standard basis of \(\mathbb{R}^n\).

\textbf{Performance metric:}
We track the standard disagreement energy
\begin{equation}
V(t)\triangleq \|x(t)-\bar x\,\mathbf{1}\|_2^2,
\end{equation}
which is non-increasing under ideal pairwise averaging and is used to state
the contraction bounds.

% =========================
\section{Main Results}
\label{sec:results}
% =========================

We present three results aligned with the baseline model of
Section~\ref{sec:model}: (i) a closed-form SIR success probability under
PPP+Aloha; (ii) a contraction bound for the average-preserving
broadcast-gossip abstraction under wireless thinning; and (iii) a proxy-based
access rule \(p^\star\). Detailed proofs and additional interpretive remarks
are deferred to the supplementary material.

\subsection{Success probability under PPP+Aloha}

Fix a receiver at the origin under the Palm distribution and consider a desired
transmitter at distance \(r>0\).

\begin{theorem}[SIR success probability]\label{thm:psucc}
Under the PPP+Aloha model with intensity \(\lambda\), Aloha probability \(p\),
Rayleigh fading, and path-loss \(\ell(r)=r^{-\alpha}\) with \(\alpha>2\), the
success probability of a transmission over distance \(r\) is
\begin{align}
\label{eq:psucc}
p_{\mathrm{succ}}(r;p)
\;&\triangleq\;
\Prb\!\big(\mathrm{SIR}>\theta\big)
\\
&=
\exp\!\Big(-\lambda p\,\pi r^2\,\theta^{2/\alpha}\,C(\alpha)\Big),
\end{align}
where
\begin{equation}
\label{eq:Calpha}
C(\alpha)=\Gamma\!\Big(1+\tfrac{2}{\alpha}\Big)\Gamma\!\Big(1-\tfrac{2}{\alpha}\Big)
=\frac{2\pi}{\alpha}\csc\!\Big(\frac{2\pi}{\alpha}\Big).
\end{equation}
\end{theorem}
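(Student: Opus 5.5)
The plan is the standard Laplace-functional argument for Poisson networks. Condition on the interferer configuration. Under the Palm distribution at the receiver (the origin), Slivnyak's theorem says the remaining points of \(\Phi\) still form a PPP of intensity \(\lambda\). Independent Aloha marking then makes the interfering transmitter set \(\Phi_p\) a PPP of intensity \(\lambda p\). The receiver itself is excluded by half-duplex, and the tagged transmitter at distance \(r\) is removed from the interference sum. Write the aggregate interference as \(I=\sum_{k\in\Phi_p} h_k\|X_k\|^{-\alpha}\), with \(h_k\) i.i.d.\ \(\mathrm{Exp}(1)\) and independent of the desired gain \(h\sim\mathrm{Exp}(1)\).

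\textbf{Step 1 (Rayleigh step).} The event \(\mathrm{SIR}>\theta\) is \(h>\theta r^{\alpha} I\). Since \(h\) is exponential and independent of \(I\), conditioning on \(I\) gives \(\Prb(\mathrm{SIR}>\theta)=\E[e^{-sI}]=\mathcal{L}_I(s)\) with \(s=\theta r^{\alpha}\).

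\textbf{Step 2 (PGFL).} Apply the probability generating functional of the PPP \(\Phi_p\), averaging over the i.i.d.\ fading marks:
\[
\mathcal{L}_I(s)=\exp\!\Big(-\lambda p\int_{\mathbb{R}^2}\Big(1-\E_h\big[e^{-s h\|x\|^{-\alpha}}\big]\Big)\,dx\Big)
=\exp\!\Big(-2\pi\lambda p\int_0^\infty \frac{s\,u^{-\alpha}}{1+s\,u^{-\alpha}}\,u\,du\Big).
\]

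\textbf{Step 3 (evaluate the integral).} Substitute \(v=u^2 s^{-2/\alpha}\). This turns the radial integral into \(\pi s^{2/\alpha}\int_0^\infty \frac{dv}{1+v^{\alpha/2}}\). The standard Beta-function identity gives \(\int_0^\infty \frac{dv}{1+v^{\delta^{-1}}}=\Gamma(1+\delta)\Gamma(1-\delta)\) with \(\delta=2/\alpha\). Euler's reflection formula then gives \(\Gamma(1+\delta)\Gamma(1-\delta)=\pi\delta/\sin(\pi\delta)\), which is \(\frac{2\pi}{\alpha}\csc\frac{2\pi}{\alpha}\). Since \(s^{2/\alpha}=\theta^{2/\alpha}r^2\), the exponent becomes \(-\lambda p\pi r^2\theta^{2/\alpha}C(\alpha)\), which is the claimed expression.

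\textbf{Main obstacle.} The main difficulty is justification rather than computation, and it has two parts.

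\emph{Convergence of the integral.} Near \(u=0\) the integrand is bounded by \(u\), so that end is harmless. At infinity it decays like \(u^{1-\alpha}\), which is integrable only because \(\alpha>2\). This is exactly where the hypothesis \(\alpha>2\) enters, and it also guarantees \(\delta<1\), so that \(\Gamma(1-\delta)\) is finite.

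\emph{Modeling subtleties.} The actual system is the PPP restricted to the window \(\mathcal{W}\). I would adopt the paper's large-window convention and work with the stationary PPP on all of \(\mathbb{R}^2\), noting that boundary effects are neglected. I would also state that the tagged transmitter is treated as an added point at distance \(r\), so that independent thinning and Slivnyak apply. In particular, the selection rule among neighbors is conditioned on but does not affect the law of the interferers in this baseline.
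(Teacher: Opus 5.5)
Your proposal is correct and follows the same route as the paper's proof: the Rayleigh step turns $\Prb(\mathrm{SIR}>\theta)$ into $\mathcal{L}_I(\theta r^{\alpha})$, the PGFL of the $\lambda p$-thinned PPP gives the exponential form, and the radial integral evaluates to $\pi s^{2/\alpha}C(\alpha)$; your explicit substitution, Beta/reflection identity, and check that $\alpha>2$ ensures convergence supply the details the paper delegates to a citation. One small point concerns your closing remark that the receiver-selection rule does not affect the interferer law: this is a modeling convention rather than a fact, since uniform selection among several in-range transmitters tilts the interferer law toward fewer in-range interferers, but the theorem concerns a tagged link at distance $r$ under the Palm distribution, exactly as in the paper, so your proof does not depend on it.
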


\begin{IEEEproof}[Proof of Theorem~\ref{thm:psucc}]
See Supplementary Appendix~B.
\end{IEEEproof}

\subsection{Wireless-thinned contraction bound}

We next relate consensus speed to two distinct ingredients:
(a) an \emph{ideal-mixing} factor associated with the average-preserving update
abstraction, and
(b) a \emph{wireless successful-update} factor induced by the PPP+Aloha link
model.

\paragraph{Ideal mixing.}
Let \(W(t)\) be the random symmetric, doubly stochastic update matrix induced
by the matching-based abstraction of Section~\ref{sec:model}. Consider the
idealized case in which every scheduled exchange succeeds. Define the
disagreement projector
\(\Pi\triangleq I-\frac{1}{n}\mathbf{1}\mathbf{1}^\top\) and
\begin{equation}
V(t)\triangleq \|x(t)-\bar x\,\mathbf{1}\|_2^2=x(t)^\top \Pi x(t).
\end{equation}

\begin{lemma}[Ideal one-step contraction]\label{lem:ideal_gap_min}
Let \(x(t{+}1)=W(t)x(t)\) with \(W(t)\) symmetric and doubly stochastic,
i.i.d.\ over \(t\). Define
\begin{equation}
\label{eq:rho_def}
\rho \triangleq \lambda_{\max}\!\Big(\E\!\big[W(t)^\top \Pi W(t)\big]\Big)\in[0,1),
\qquad
\gamma \triangleq 1-\rho.
\end{equation}
Then, for the ideal always-successful process,
\begin{equation}
\label{eq:ideal_step_min}
\E[V(t{+}1)\mid x(t)] \le \rho\,V(t)=(1-\gamma)V(t).
\end{equation}
\end{lemma}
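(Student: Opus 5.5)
The plan is a Rayleigh-quotient argument after projecting onto the disagreement subspace. Write \(y(t)\triangleq \Pi x(t)=x(t)-\bar x\mathbf{1}\), so \(V(t)=\|y(t)\|^2=x(t)^\top\Pi x(t)\). First, \(\bar x\) is invariant. Since \(W(t)\) is doubly stochastic, \(\mathbf{1}^\top W(t)=\mathbf{1}^\top\), hence \(\mathbf{1}^\top x(t{+}1)=\mathbf{1}^\top x(t)\) and the average stays at \(\bar x\) along every sample path. Because \(W(t)\mathbf{1}=\mathbf{1}\), we also have \(W(t)\Pi=\Pi W(t)\). Therefore
\[
V(t{+}1)=x(t)^\top W(t)^\top\Pi W(t)x(t)=y(t)^\top W(t)^\top\Pi W(t)\,y(t),
\]
where the replacement of \(x(t)\) by \(y(t)\) uses \(\Pi\mathbf{1}=0\) and \(W(t)\mathbf{1}=\mathbf{1}\): the component \(\bar x\mathbf{1}\) is mapped to \(\bar x\mathbf{1}\) and then annihilated by \(\Pi\).

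Next, take the conditional expectation given \(x(t)\). The i.i.d.\ assumption means \(W(t)\) is independent of the past, and hence of \(x(t)\), which is a function of \(x(0)\) and \(W(0),\dots,W(t{-}1)\). So
\[
\E[V(t{+}1)\mid x(t)]=y(t)^\top M\,y(t),\qquad M\triangleq\E[W^\top\Pi W].
\]
The matrix \(M\) is symmetric and positive semidefinite, so by the Rayleigh quotient \(y^\top M y\le\lambda_{\max}(M)\|y\|^2=\rho V(t)\). This gives \eqref{eq:ideal_step_min} directly.

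It remains to justify the claimed range \(\rho\in[0,1)\).
\begin{itemize}
\item \(\rho\ge0\) follows from positive semidefiniteness of \(M\).
\item For \(\rho\le1\): \(\Pi\) is an orthogonal projector and each \(W(t)\) of the form \eqref{eq:W_matching} has eigenvalues in \(\{0,1\}\) (each matched-pair term \(\frac12(e_i-e_j)(e_i-e_j)^\top\) is a projector, and the pairs are disjoint). Hence \(\|W\|_2\le1\), so \(W^\top\Pi W\preceq W^\top W\preceq I\), and \(\rho\le1\).
\end{itemize}

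The main obstacle is the strict inequality \(\rho<1\). It does not hold for arbitrary doubly stochastic \(W\); for example, it fails if \(W=I\) almost surely. I would note that \(M\mathbf{1}=0\), so the relevant quantity is the top eigenvalue of \(M\) restricted to \(\mathbf{1}^\perp\). Suppose \(y\perp\mathbf{1}\), \(\|y\|=1\), and \(y^\top My=1\). Then \(\|\Pi Wy\|=\|Wy\|=1\) almost surely. Since \(W\) is a projector, this forces \(Wy=y\) a.s., i.e.\ \(y_i=y_j\) for every pair \((i,j)\) matched with positive probability. If the union graph of pairs matched with positive probability, \(\{(i,j):\Prb((i,j)\in\mathcal{M}(t))>0\}\), is connected, this forces \(y\) to be constant. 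But \(y\perp\mathbf{1}\) and \(\|y\|=1\), which is a contradiction, so \(\rho<1\). I would state this connectivity condition explicitly as the standing assumption under which \(\rho\in[0,1)\) holds, since under the quasi-static PPP model it amounts to the \(R\)-neighborhood graph on \(\mathcal{V}\) being connected with every edge eligible for matching. Otherwise the inequality \eqref{eq:ideal_step_min} still holds with \(\rho\le 1\).
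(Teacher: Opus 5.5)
Your proof is correct and follows the paper's route. Like the paper, you write \(V(t{+}1)=x(t)^\top W(t)^\top\Pi W(t)x(t)\), take the conditional expectation to get the quadratic form of \(\E[W^\top\Pi W]\), and bound it by \(\lambda_{\max}\) times \(x(t)^\top\Pi x(t)\). You also make explicit steps the paper glosses: the independence of \(W(t)\) from \(x(t)\), and the identities \(W\mathbf{1}=\mathbf{1}\), \(\Pi\mathbf{1}=0\), which mean only the projected state \(y(t)\) matters.

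Your treatment of the range \(\rho\in[0,1)\) goes beyond the paper, which simply asserts it. You are right that \(\rho<1\) can fail (e.g.\ \(W=I\) a.s.). Your connectivity argument for the matching form, which relies on each \(W\) being an orthogonal projector, is sound.
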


\begin{IEEEproof}[Proof of Lemma~\ref{lem:ideal_gap_min}]
See Supplementary Appendix~C.
\end{IEEEproof}

\paragraph{A conservative per-slot successful-update lower bound.}
A typical node updates in slot \(t\) if (i) it listens, (ii) at least one
neighbor within range \(R\) transmits, and (iii) at least one such
transmission is decodable. A conservative lower bound is obtained by requiring
that there exists at least one transmitter in the ball \(B(0,R)\) and that a
transmitter at the worst-case distance \(R\) succeeds. This yields
\begin{align}
\label{eq:q_lb_def_clean}
q_{\mathrm{lb}}(p)
&\triangleq
(1-p)\Big(1-e^{-\lambda p \pi R^2}\Big)e^{-\lambda p K(R)},
\\
K(R)&=\pi R^2\theta^{2/\alpha}C(\alpha).
\end{align}

\begin{theorem}[Consensus contraction under wireless thinning]\label{thm:contraction_clean}
Let \(V(t)\triangleq \|x(t)-\bar x\,\mathbf{1}\|_2^2\) be the disagreement
energy. Under the baseline model of Section~\ref{sec:model}, we have
\begin{equation}
\label{eq:contraction_clean}
\E[V(t)] \le \big(1-\gamma\,q_{\mathrm{lb}}(p)\big)^t\,V(0),
\end{equation}
where \(\gamma\) is defined in \eqref{eq:rho_def} and \(q_{\mathrm{lb}}(p)\)
in \eqref{eq:q_lb_def_clean}. Consequently, the \(\varepsilon\)-consensus time
satisfies
\begin{equation}
\label{eq:Teps_clean}
T_\varepsilon
\le
\frac{\log\!\big(V(0)/\varepsilon\big)}{-\log\!\big(1-\gamma q_{\mathrm{lb}}(p)\big)}
\le
\frac{1}{\gamma\,q_{\mathrm{lb}}(p)}\log\!\Big(\frac{V(0)}{\varepsilon}\Big).
\end{equation}
\end{theorem}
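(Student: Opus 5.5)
The plan is to model the wireless process as a random thinning of the ideal process, prove a one-step conditional contraction with rate \(\gamma\,q_{\mathrm{lb}}(p)\), and then iterate it by the tower property. The thinning can be written in two equivalent ways. In each slot, the ideal matching \(\mathcal{M}(t)\) is generated, and each scheduled exchange is then executed only if its wireless success event occurs. The alternative coupling, which I will use, lets the whole ideal update \(W(t)\) occur with an effective probability \(q_t\ge q_{\mathrm{lb}}(p)\) and leaves the state unchanged otherwise. Either way, the realized matrix is \(\widetilde W(t)=I-\frac12\sum_{(i,j)\in\widetilde{\mathcal M}(t)}(e_i-e_j)(e_i-e_j)^\top\), where \(\widetilde{\mathcal M}(t)\subseteq\mathcal M(t)\). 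This matrix is again symmetric and doubly stochastic, so the average \(\bar x\) is preserved and \(V(t)=x(t)^\top\Pi x(t)\) remains the correct Lyapunov function.

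\textbf{One-step bound.} Fix \(x(t)\). If the slot fails, then \(V(t+1)=V(t)\). If it succeeds, then by \MainLemmaOne\ we have \(\E[V(t+1)\mid x(t),\text{success}]\le(1-\gamma)V(t)\). This requires the success indicator to be independent of \(W(t)\) given the geometry. Independence holds because the fading and Aloha coins that determine decoding are independent of the matching-selection randomness, and \(x(t)\) is independent of slot-\(t\) randomness. Mixing the two cases gives
\[
\E[V(t+1)\mid x(t)]\le \bigl(1-q_t\gamma\bigr)V(t)\le\bigl(1-\gamma q_{\mathrm{lb}}(p)\bigr)V(t).
\]
It remains to justify \(q_t\ge q_{\mathrm{lb}}(p)\). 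A typical node is a listener with probability \(1-p\). Under the PPP, the number of transmitters in \(B(0,R)\) is Poisson with mean \(\lambda p\pi R^2\), so at least one is present with probability \(1-e^{-\lambda p\pi R^2}\). The selected link has length at most \(R\), and \(p_{\mathrm{succ}}(r;p)\) in Theorem~\ref{thm:psucc} is decreasing in \(r\), so that link succeeds with probability at least \(e^{-\lambda pK(R)}\). Independence of the listening event from the transmitter process follows from the independence of Aloha marks, and Slivnyak's theorem handles conditioning on the presence of a transmitter. Multiplying these three factors gives \eqref{eq:q_lb_def_clean}.

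\textbf{Iteration and consensus time.} Taking expectations and inducting on \(t\) yields \(\E[V(t)]\le(1-\gamma q_{\mathrm{lb}})^tV(0)\). For \(T_\varepsilon\), set the right-hand side equal to \(\varepsilon\) and solve for \(t\), which gives the first bound. The second bound follows from \(-\log(1-u)\ge u\) for \(u\in(0,1)\).

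\textbf{Main obstacle.} The hard step is turning the per-node lower bound \(q_{\mathrm{lb}}\), which holds for a typical node in the Palm sense, into a lower bound on the mixing of the whole matrix. Per-edge thinning does not simply scale \(\E[W^\top\Pi W]-\Pi\) by a scalar unless every edge survives with a common probability. My first attempt is the following coupling. Conditional on geometry, let each scheduled ideal edge be retained independently with probability exactly \(q_{\mathrm{lb}}\), by further randomly discarding successes, which only slows the process. Because \(I-W=\frac12\sum_e L_e\) is linear in the edge Laplacians and the \(L_e\) for edges of a matching are orthogonal, we get \(W^\top\Pi W=\Pi-\frac12\sum_{e}L_e\). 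Hence \(\E[\widetilde W^\top\Pi\widetilde W]=\Pi-q_{\mathrm{lb}}(\Pi-\E[W^\top\Pi W])\), whose largest eigenvalue on \(\mathbf 1^\perp\) is at most \(1-q_{\mathrm{lb}}\gamma\). This argument needs \(q_{\mathrm{lb}}\) to lower-bound every scheduled edge's success probability, not just the typical one. I would secure this by the worst-case-distance argument together with the stationarity and large-window assumptions of Section~\ref{sec:model}, and state it as part of the baseline abstraction.
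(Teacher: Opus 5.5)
Your skeleton is the paper's proof. The supplement introduces a slot indicator $S(t)$, asserts $\Prb(S(t)=1)\ge q_{\mathrm{lb}}(p)$ ``by construction of $q_{\mathrm{lb}}$'', uses $V(t+1)=V(t)$ on failure and Lemma~1 on success, mixes the two cases, iterates, and gets $T_\varepsilon$ from $-\log(1-x)\ge x$, exactly as you do.

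The obstacle you flag is genuine, and the paper never resolves it. Your three-factor computation gives a typical-node (Palm) probability. The all-or-nothing event ``the whole ideal $W(t)$ is executed'' instead requires every scheduled pair to decode, so its probability shrinks with the number of pairs. The computation therefore does not give $q_t\ge q_{\mathrm{lb}}(p)$.

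Your independence claim also fails. The Aloha coins determine both the listener/transmitter split, hence $W(t)$, and the interference. Conditioning on decoding success therefore tilts $W(t)$ toward slots with few active transmitters, and hence typically smaller matchings. Lemma~1 cannot be applied to that conditional law as it stands. The paper makes both of these moves silently.

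Your per-edge coupling is the right repair, and it genuinely departs from the paper. The identity $W^\top\Pi W=\Pi-\frac12\sum_{e\in\mathcal M(t)}L_e$ is correct, because a matching matrix is an idempotent that commutes with $\Pi$. Since it is linear in the edge indicators and every $L_e\succeq 0$, you need neither independent retention, nor discarding, nor even $\widetilde{\mathcal M}(t)\subseteq\mathcal M(t)$. The marginal condition $\Prb(e\in\widetilde{\mathcal M}(t))\ge q\,\Prb(e\in\mathcal M(t))$ for every pair $e$ already gives
\[
\E\bigl[\widetilde W^\top\Pi\widetilde W\bigr]\preceq(1-q)\,\Pi+q\,\E\bigl[W^\top\Pi W\bigr].
\]
This is exactly the whole-slot mixture, which is why the paper's all-or-nothing heuristic lands on the right second-moment bound.

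It also shows that the relevant $q$ is the decoding probability of a scheduled link, with surrogate $e^{-\lambda pK(R)}$. The factors $(1-p)$ and $1-e^{-\lambda p\pi R^2}$ already sit inside $\gamma$, because the ideal process contains the Aloha split. So $q_{\mathrm{lb}}$ counts them twice, which is harmless for the weaker stated bound.

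What neither route supplies is the per-link bound itself, and stationarity plus a large window cannot supply it. $\gamma$ and $\E$ refer to a fixed snapshot, while Theorem~1 is an average over the PPP. A scheduled link whose receiver has $m$ other nodes at distances comparable to the link length decodes with probability of order $(1-p/2)^m$ when $\theta=1$, whatever $\lambda$ is. Conditioning on the schedule also reveals other nodes' Aloha marks, so Slivnyak's theorem does not apply directly.

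The per-link bound must therefore be an explicit assumption, as you propose; on such atypical snapshots the stated inequality can genuinely fail. Your argument is thus conditional on that assumption. It still isolates the needed hypothesis more sharply than the paper does, whose implicit whole-slot assumption is strictly stronger.
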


\begin{IEEEproof}[Proof of Theorem~\ref{thm:contraction_clean}]
See Supplementary Appendix~C.
\end{IEEEproof}

\subsection{Optimal Aloha probability}

A tractable design rule is to choose \(p\) that maximizes a successful-update
proxy derived from the lower bound above.

\begin{proposition}[Closed-form \(p^\star\) for a proxy availability--reliability rule]\label{prop:pstar_clean}
Let
\begin{equation}
a \triangleq \lambda\pi R^2,\qquad
b \triangleq \lambda K(R)=\lambda\pi R^2\theta^{2/\alpha}C(\alpha).
\end{equation}
Consider the simplified proxy
\begin{equation}
\tilde q(p)\triangleq \big(1-e^{-ap}\big)e^{-bp},\qquad p\in[0,1],
\end{equation}
obtained by dropping the half-duplex factor \(1-p\). Since this omission is
mild when \(p\) is small to moderate, \(\tilde q(p)\) should primarily be
viewed as an operating-region proxy and scaling rule rather than as a fully
faithful model of the true successful-update probability. Its maximizer is
\begin{equation}
\label{eq:pstar_closed}
p^\star = \min\!\Big(1,\ \frac{1}{a}\log\!\Big(\frac{a+b}{b}\Big)\Big).
\end{equation}
Moreover, in the dense-neighborhood regime where \(1-e^{-ap}\approx 1\), the
maximizer further simplifies to
\begin{equation}
\label{eq:pstar_simple_clean}
p^\star \approx \min\!\Big(1,\ \frac{1}{b}\Big)
=\min\!\Big(1,\ \frac{1}{\lambda\pi R^2\theta^{2/\alpha}C(\alpha)}\Big).
\end{equation}
\end{proposition}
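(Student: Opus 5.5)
Since $\tilde q$ is a smooth function of a single variable on $[0,1]$, we maximize it by elementary calculus. It is convenient to work with $\log\tilde q(p)=\log(1-e^{-ap})-bp$ on $(0,1]$, since $\tilde q(0)=0$ and $\tilde q>0$ on $(0,1]$. Differentiating gives
\begin{equation*}
\frac{d}{dp}\log\tilde q(p)=\frac{a e^{-ap}}{1-e^{-ap}}-b=\frac{a}{e^{ap}-1}-b .
\end{equation*}
This derivative is strictly decreasing in $p$: it tends to $+\infty$ as $p\downarrow 0$ and to $-b<0$ as $p\to\infty$. Hence $\log\tilde q$ is strictly concave on $(0,\infty)$ and has a unique stationary point $p_0$ on the half-line. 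This point is characterized by $e^{ap_0}-1=a/b$, which gives
\begin{equation*}
p_0=\frac{1}{a}\log\!\Big(\frac{a+b}{b}\Big).
\end{equation*}
Because $a,b>0$, we have $p_0>0$.

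Next we restrict to the constraint set $[0,1]$. By strict concavity (unimodality), $\tilde q$ is increasing on $(0,p_0]$ and decreasing on $[p_0,\infty)$. If $p_0\le 1$, the maximizer over $[0,1]$ is therefore $p_0$. If $p_0>1$, then $\tilde q$ is increasing on all of $[0,1]$ and the maximizer is the endpoint $p=1$. Together the two cases give $p^\star=\min(1,p_0)$, which is \eqref{eq:pstar_closed}. The only point needing care is the sign analysis that makes this clipping legitimate. It follows directly from monotonicity of $a/(e^{ap}-1)$, and no second-derivative computation is required.

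For the dense-neighborhood simplification \eqref{eq:pstar_simple_clean}, we consider two routes.
\begin{itemize}
\item We can replace $1-e^{-ap}$ by $1$. But then the proxy becomes $e^{-bp}$, whose maximizer is the degenerate point $p=0$, so this literal substitution is not informative.
\item The more faithful route is to interpret the approximation as holding at the operating point, where availability has saturated. In the regime $a/b=\theta^{-2/\alpha}C(\alpha)^{-1}$, the ratio $a/b$ does not depend on $\lambda$. Writing $p_0=\frac{1}{b}\cdot\frac{b}{a}\log(1+a/b)$ shows that $p_0=\kappa/b$ with $\kappa\triangleq\frac{b}{a}\log(1+\frac{a}{b})\in(0,1)$ a constant depending only on $(\theta,\alpha)$.
\end{itemize}
Moreover, $\kappa\to 1$ when $a/b\to 0$, i.e.\ when the reliability constraint dominates ($\theta^{2/\alpha}C(\alpha)\gg 1$). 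This yields $p^\star\approx\min(1,1/b)$ and the scaling $p^\star=\Theta(1/\lambda)$.

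The main obstacle is this last step: justifying the heuristic ``$\approx$'' precisely. We would state it as the first-order expansion $\log(1+u)=u+O(u^2)$ applied with $u=a/b$. We would also note explicitly that the $\Theta(1/\lambda)$ scaling holds exactly, regardless of whether the approximation is sharp.
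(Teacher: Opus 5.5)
Your closed-form argument is correct and rests on the same stationary-point equation as the paper's proof. The paper differentiates $\tilde q$ directly, cancels $e^{-bp}$ to get $ae^{-ap}=b(1-e^{-ap})$, and then simply invokes ``projection onto $[0,1]$''. Your log-derivative $a/(e^{ap}-1)-b$ decreases strictly from $+\infty$ to $-b$. That adds what the paper leaves implicit: uniqueness of the critical point and unimodality of $\tilde q$, which is exactly what makes the $\min(1,\cdot)$ clipping legitimate.

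For the dense-regime simplification your route genuinely differs. The paper argues that $1-e^{-ap}\approx 1$ ``reduces the tradeoff to the throughput form $pe^{-bp}$''. This links the rule to classical slotted-Aloha intuition, but it is loose. As you note, the literal substitution gives $e^{-bp}$, whose maximizer is $p=0$; the form $pe^{-bp}$ actually comes from the linearization $1-e^{-ap}\approx ap$.

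Your exact rewriting $p_0=\kappa/b$, with $\kappa=\log(1+u)/u$ and $u=a/b=1/(\theta^{2/\alpha}C(\alpha))$, is sharper in three ways:
\begin{itemize}
\item the $\Theta(1/\lambda)$ scaling becomes exact;
\item the error is quantified by $\kappa=1-u/2+O(u^2)$;
\item its size depends only on $(\theta,\alpha)$, not on density.
\end{itemize}
For $\theta=1$, $\alpha=4$ one has $C=\pi/2$ and $\kappa\approx 0.77$. This matches the ratio of the supplement's two worst-case-radius markers, $0.112$ and $0.145$.

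One sentence needs correcting: the regime $\kappa\to 1$ is not one in which ``availability has saturated at the operating point''. At the stationary point, $1-e^{-ap_0}=a/(a+b)=u/(1+u)$, which tends to $0$ as $u\to 0$. So $p^\star\approx 1/b$ holds precisely when availability at the optimum is small and $1-e^{-ap}\approx ap$, i.e.\ the throughput linearization. In the genuinely saturated regime $a\gg b$, one has $\kappa\to 0$ and $p^\star\approx \frac{1}{a}\log(a/b)\ll 1/b$. Once that interpretation is fixed, your argument is complete and more careful than the paper's.
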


\begin{IEEEproof}[Proof of Proposition~\ref{prop:pstar_clean}]
See Supplementary Appendix~D.
\end{IEEEproof}

\section{Numerical Validation}
\label{sec:sim}

The analytical predictions were validated through explicit-interference
simulations that do not sample packet outcomes from the closed-form PPP success
law. Across the considered experiments, the disagreement metric exhibits a
clear non-monotone dependence on the Aloha access probability \(p\), confirming
a stable intermediate operating region. Additional studies also show that the
worst-case analytical surrogates are conservative, while refined empirical
distance averaging improves numerical prediction. The same qualitative design
message remains robust under receiver-rule changes, additive noise, alternative
fading laws, and more regular spatial layouts. Full simulation details,
figures, proxy-accuracy comparisons, lower-bound tightness studies, and
robustness experiments are provided in the supplementary material.

\section{Conclusion}
\label{sec:conclusion}

This letter studied a tractable baseline question for decentralized drone
swarms: how random broadcast gossip contracts disagreement when the wireless
medium is shared, interference-limited, and accessed through slotted Aloha.
Under a Poisson spatial model with threshold-based decoding, we obtained three
main analytical results. First, Theorem~\ref{thm:psucc} gives a closed-form
reception law that makes the dependence on density, access probability, link
distance, and the SIR threshold explicit. Second,
Theorem~\ref{thm:contraction_clean} yields a mean-square contraction bound in
which the convergence rate separates into an \emph{ideal-mixing} component,
capturing the strength of the average-preserving update abstraction, and a
\emph{wireless successful-update} component, capturing how often useful
exchanges occur under interference. Third,
Proposition~\ref{prop:pstar_clean} provides an explicit access rule \(p^\star\)
for an availability--reliability proxy and reveals the interpretable scaling
\(p^\star=\Theta(1/\lambda)\) for fixed \((R,\theta,\alpha)\).

The numerical results support this design message while also clarifying its
scope. Explicit-interference simulations confirm that the disagreement curve is
unimodal in \(p\), so the best operating point lies in an intermediate range.
Further proxy-accuracy, tightness, and robustness results are reported in the
supplementary material.

Accordingly, the present results should be interpreted as a tractable baseline
benchmark and a density-adaptive MAC guideline under the adopted abstraction,
not as a universal exact optimizer for all UAV deployments. Several
simplifying assumptions were made deliberately to keep the analysis
transparent. The geometry is treated as quasi-static over the contraction
horizon, decoding is described through a threshold model, Rayleigh fading is
used as a tractable baseline, and successful broadcast receptions are mapped to
pairwise matchings in order to preserve the global average. These choices make
the problem analytically clean, but they also leave out protocol overhead,
richer multi-receiver behavior, strongly structured formations, and
long-horizon mobility effects. The contribution of the letter is therefore less
a complete protocol specification than a compact analytical framework that
exposes the main interference--mixing tradeoff and turns it into an
interpretable access-design rule.

\paragraph*{Future work}
A natural next step is to relax these baseline assumptions in a controlled way.
One important direction is to move beyond the static snapshot and study
time-varying geometries induced by mobility, including heterogeneous speeds and
direction changes, in order to understand how temporal diversity modifies both
interference and mixing. It is also important to incorporate more structured
spatial models, explicit SINR parameterizations with concrete PHY quantities,
and richer air-to-air fading laws. On the algorithmic side, it would be useful
to replace the current matching abstraction by a more protocol-aware broadcast
update model that accounts for coordination, acknowledgments, or scheduling
overhead while still preserving average consensus. A further useful direction
is to characterize more sharply the gap between the conservative lower-bound
successful-update term and the realized node-level update probability under
explicit interference. Such extensions would help bridge the gap between the
present tractable baseline and more realistic swarm communication settings.

\bibliographystyle{IEEEtran}

\clearpage

\begin{center}
{\LARGE Supplementary Material for}\\[0.25em]
{\LARGE Wireless Broadcast Gossip for Decentralized Drone Swarms: Success Probability, Contraction, and Optimal Aloha}\\[0.7em]
{\large Ali Khalesi}
\end{center}

\noindent\textit{Reference convention:} Throughout this supplement, references such as
\MainThmOne, \MainThmTwo, \MainPropOne, and \MainEqPsucc{} refer to the numbering in the
main manuscript. Figures and tables appearing only in this supplement are labeled
with the prefix ``S'' to avoid ambiguity.

\appendices

\section{Additional Model Clarifications}

\begin{table*}[!t]
\caption{Illustrative PHY/MAC parameterization used only to interpret the
normalized threshold model and the interference-limited regime. These values
are not used in the theorem statements.}
\label{tab:phy_params_supp}
\centering
\small
\setlength{\tabcolsep}{4pt}
\renewcommand{\arraystretch}{1.08}
\begin{tabularx}{\textwidth}{@{}L{0.24\textwidth}L{0.20\textwidth}Y@{}}
\toprule
\textbf{Parameter} & \textbf{Example value} & \textbf{Role} \\
\midrule
Carrier frequency \(f_c\) & \(2.4\) GHz & short-range unlicensed link \\
Bandwidth \(B\) & \(5\) MHz & narrowband control channel \\
Transmit power \(P_{\mathrm{tx}}\) & \(20\) dBm & representative short-range power \\
Noise figure \(F\) & \(6\) dB & receiver quality assumption \\
Thermal noise density \(N_0\) & \(-174\) dBm/Hz & standard reference value \\
Noise power \(N_0B+F\) & \(\approx -101\) dBm & for \(B=5\) MHz and \(F=6\) dB \\
Decoding threshold \(\theta\) & \(0\) dB (\(\theta=1\)) & effective PHY operating point \\
\bottomrule
\end{tabularx}
\end{table*}

\textbf{Representative PHY interpretation:}
Although the baseline analysis is written in normalized form, the threshold
model can be interpreted relative to a conventional narrowband wireless link
budget. For transmit power \(P_{\mathrm{tx}}\), bandwidth \(B\), receiver noise
figure \(F\), and thermal spectral density \(N_0\), the additive-noise term is
of order \(N_0BF\), while \(\theta\) represents the effective decoding
requirement associated with a chosen modulation/coding operating point.
Representative short-range settings may involve bandwidths on the order of
\(1\) to \(10\) MHz, transmit powers in the tens to hundreds of milliwatts,
and standard receiver noise figures; under dense co-channel reuse, the
aggregate interference can then dominate the thermal-noise term over the
operating region of interest.

As an order-of-magnitude illustration, at \(f_c=2.4\) GHz and distance
\(r=50\) m, the free-space path loss is about \(74\) dB, so a transmitter with
\(P_{\mathrm{tx}}=20\) dBm yields a nominal received power near \(-54\) dBm
before fading, antenna effects, and additional implementation losses. Compared
with the noise level in Table~\ref{tab:phy_params_supp}, this leaves several
tens of decibels of short-range margin, so once multiple co-channel
transmitters are active the operating regime is plausibly interference-limited.
The purpose of this table is not to redefine the analysis in link-budget form,
but to anchor the normalized SIR/SINR abstraction and the robustness experiment
summarized in \MainSecNum{} of the main manuscript in a representative
physical regime.

\begin{figure*}[!t]
\centering
\resizebox{\textwidth}{!}{%
\begin{tikzpicture}[
  font=\small,
  arr/.style={-{Latex[length=2.6mm]}, thick},
  darr/.style={-{Latex[length=2.4mm]}, thick, dashed, black!55},
  ball/.style={black!35, thick},
  node/.style={circle, draw, thick, inner sep=1.2pt, minimum size=15pt},
  tx/.style={node, fill=black!12},
  rx/.style={node, fill=white},
  inf/.style={node, draw=black!40},
  box/.style={draw, rounded corners, thick, fill=white, inner sep=7pt, align=left},
  lab/.style={fill=white, inner sep=2pt},
]
\node[rx] (i) at (0,0) {$i$};
\node[tx] (j) at (1.55,0.85) {$j$};

\draw[ball] (i) circle (2.05);
\node[lab] at (-0.95,2.05) {$B(i,R)$};

\node[inf] (k1) at (-1.55, 1.35) {$k$};
\node[inf] (k2) at (-1.35,-1.20) {$k$};
\node[inf] (k3) at ( 0.30, 2.55) {$k$};
\node[inf] (k4) at ( 2.55, 2.15) {$k$};
\node[inf] (k5) at ( 2.95,-0.35) {$k$};

\draw[arr]  (j) -- (i);
\draw[darr] (k1) -- (i);
\draw[darr] (k2) -- (i);
\draw[darr] (k3) -- (i);
\draw[darr] (k4) -- (i);
\draw[darr] (k5) -- (i);

\node[box, anchor=west, minimum width=8.3cm] (selbox) at (7.60,2.65) {%
\textbf{Selected link}\\[-0.25em]
$r=\|X_j-X_i\|$};

\node[box, anchor=west, minimum width=8.3cm] (phy) at (7.60,1.25) {%
\textbf{Wireless link model (baseline)}\\[-0.25em]
$\ell(r)=r^{-\alpha}$, $\alpha>2$;\; Rayleigh $h\sim\mathrm{Exp}(1)$.\\
Decode if $\mathrm{SIR}_{j\to i}(t)>\theta$ (see \MainEqSIR{} of the main manuscript).};

\node[box, anchor=west, minimum width=8.3cm] (mac) at (7.60,-0.25) {%
\textbf{MAC + receiver abstraction (slot $t$)}\\[-0.25em]
Transmit w.p.\ $p$ (Aloha), listen w.p.\ $1-p$ (half-duplex).\\
Listener $i$ selects one TX neighbor in $B(i,R)$ (e.g., uniformly).};

\node[box, anchor=west, minimum width=8.3cm] (upd) at (7.60,-1.75) {%
\textbf{Average-preserving update abstraction}\\[-0.25em]
Successful receptions are mapped to a matching $\mathcal{M}(t)$.\\
For $(i,j)\in\mathcal{M}(t)$:\;
$x_i(t{+}1)=x_j(t{+}1)=\tfrac12(x_i(t)+x_j(t))$.};

\coordinate (midlink) at ($(i)!0.55!(j)$);
\coordinate (perp)    at ($(midlink)!0.22!90:(j)$);
\draw[arr] (midlink) -- (perp) |- (selbox.west);

\coordinate (tap1) at (2.75, 1.05);
\coordinate (tap2) at (2.75, 0.10);
\coordinate (tap3) at (2.75,-0.95);

\draw[arr] (tap1) to[out=0,in=180] (phy.west);
\draw[arr] (tap2) to[out=0,in=180] (mac.west);
\draw[arr] (tap3) to[out=0,in=180] (upd.west);

\node[box, anchor=west, minimum width=16.1cm] (leg) at (-1.95,-3.15) {%
\textbf{Symbols:}\;
$\Phi$ PPP intensity $\lambda$;\;
$\mathcal{T}(t)$ active TX ($p$-thinning);\;
$R$ neighbor radius;\;
$\theta$ SIR threshold.\quad
\raisebox{0.15em}{\tikz\draw[thick,fill=black!12] (0,0) circle (2.2pt);} TX\;
\raisebox{0.15em}{\tikz\draw[thick] (0,0) circle (2.2pt);} RX\;
\raisebox{0.15em}{\tikz\draw[thick,dashed,-{Latex[length=2.1mm]}] (0,0) -- (0.6,0);} interference
};
\end{tikzpicture}%
}
\caption{System model for one slot of average-preserving wireless broadcast gossip.
Drone locations are modeled by a planar PPP of intensity $\lambda$.
In each slot, each drone transmits with probability $p$ and otherwise listens.
A typical listener $i$ considers neighbors within range $R$ and selects one
transmitting neighbor $j$.
Decoding succeeds if $\mathrm{SIR}_{j\to i}(t)>\theta$ under Rayleigh fading
and path-loss exponent $\alpha>2$.
To obtain an average-preserving iteration, successful receptions are
\emph{abstracted} as a matching $\mathcal{M}(t)$ so that each node
participates in at most one symmetric averaging step per slot, yielding a
doubly stochastic update $x(t{+}1)=W(t)x(t)$.}
\label{fig:sysmodel_supp}
\end{figure*}

\begin{remark}[Interpretation of the matching abstraction]
The matching step is introduced to preserve average consensus
and enable clean spectral contraction analysis. It should
not be interpreted as a claim that practical UAV broadcast
systems naturally realize one matched exchange per slot
without coordination cost. Any handshake, acknowledgment,
contention-resolution, or scheduling overhead would effectively
reduce the realized successful-update rate and can therefore
be viewed as an additional thinning factor. Equivalently, one
may interpret the present baseline as capturing the idealized
wireless success process up to a multiplicative implementation
factor \(\eta_{\mathrm{ctl}}\in(0,1]\), where smaller
\(\eta_{\mathrm{ctl}}\) corresponds to larger coordination
overhead. In this interpretation, protocol overhead primarily
rescales the realized successful-update term rather than changing
the qualitative existence of an intermediate optimal access region.
\end{remark}

\begin{remark}[Simple control-overhead sensitivity]
A minimal way to parameterize handshake/ACK/backoff cost is
to assume that only a fraction \(\eta_{\mathrm{ctl}}(p)\in(0,1]\)
of wireless-feasible matched proposals are ultimately executed
after control signaling. For example, a constant control fraction
\(\beta\in[0,1)\) gives the crude surrogate
\(\eta_{\mathrm{ctl}}=1-\beta\), while an average of
\(c_{\mathrm{ctl}}\) short control mini-slots per executed exchange
suggests \(\eta_{\mathrm{ctl}}\approx (1+c_{\mathrm{ctl}})^{-1}\).
Under this interpretation, the contraction law of \MainThmTwo{}
is heuristically replaced by
\begin{equation}
\label{eq:eta_ctl_surrogate}
\mathbb{E}[V(t)]
\le
\bigl(1-\gamma\,\eta_{\mathrm{ctl}}(p)\,q_{\mathrm{lb}}(p)\bigr)^t V(0).
\end{equation}
If \(\eta_{\mathrm{ctl}}(p)\) is approximately constant in \(p\),
protocol overhead mainly rescales the convergence time without
changing the proxy-optimal access probability. If
\(\eta_{\mathrm{ctl}}(p)\) decreases with \(p\), the preferred
operating region shifts mildly toward smaller access probabilities.
\end{remark}

\section{Proof of \MainThmOne}

Condition on the intended link length \(r\) and on the PPP of
interferers. Under Rayleigh fading,
\begin{equation}
\Prb(\mathrm{SIR}>\theta \mid I)
=
\E\!\left[e^{-\theta r^\alpha I}\mid I\right]
=
\mathcal{L}_I(\theta r^\alpha),
\end{equation}
where \(I=\sum_{k\in\Phi_t\setminus\{0\}} h_k \|X_k\|^{-\alpha}\) is the
aggregate interference and \(\Phi_t\) is the PPP of active transmitters, i.e.,
the \(p\)-thinning of \(\Phi\), hence a PPP with intensity \(\lambda p\).
Using the probability generating functional of the PPP,
\begin{align}
\mathcal{L}_I(s)
&=
\E\!\left[\prod_{X\in\Phi_t} \E_h\!\left(e^{-s h \|X\|^{-\alpha}}\right)\right]\nonumber\\
&=
\exp\!\left(
-\lambda p \int_{\mathbb{R}^2}
\Bigl(1-\frac{1}{1+s\|x\|^{-\alpha}}\Bigr)\,dx
\right)\nonumber\\
&=
\exp\!\left(
-2\pi\lambda p \int_0^\infty \frac{s r^{-\alpha}}{1+s r^{-\alpha}}\,r\,dr
\right).
\end{align}
Evaluating the radial integral for \(\alpha>2\) gives
\[
2\pi \int_0^\infty \frac{s r^{-\alpha}}{1+s r^{-\alpha}}\,r\,dr
=
\pi s^{2/\alpha} C(\alpha),
\]
hence
\[
\mathcal{L}_I(s)=\exp(-\lambda p \pi s^{2/\alpha} C(\alpha)).
\]
Plugging \(s=\theta r^\alpha\) yields \MainEqPsucc. A detailed
derivation appears in \cite{HaenggiJSAC2009}.

\section{Proof of \MainLemmaOne{} and \MainThmTwo}

\subsection*{Proof of \MainLemmaOne}

Since \(V(t{+}1)=x(t)^\top W(t)^\top \Pi W(t)\,x(t)\),
\begin{align*}
\E[V(t{+}1)\mid x(t)]
&=x(t)^\top \E\!\big[W(t)^\top \Pi W(t)\big] x(t)
\\
&\le \lambda_{\max}\!\Big(\E[W(t)^\top \Pi W(t)]\Big)\,x(t)^\top \Pi x(t)
\\
&=\rho\,V(t),
\end{align*}
where we used \(\Pi\succeq 0\) and
\(x^\top A x\le \lambda_{\max}(A)\,x^\top x\) on the subspace
\(\mathbf{1}^\perp\), equivalently through \(x^\top \Pi x\).
This gives \MainEqIdeal.

\subsection*{Proof of \MainThmTwo}

Let \(S(t)\in\{0,1\}\) indicate the event that slot \(t\) produces
a successful averaging exchange under the abstraction of the
main manuscript. By construction of \(q_{\mathrm{lb}}(p)\),
\begin{equation}
\Prb(S(t)=1)\ge q_{\mathrm{lb}}(p).
\end{equation}
When \(S(t)=0\), no effective exchange occurs and thus
\(V(t{+}1)=V(t)\). When \(S(t)=1\), the ideal averaging update
occurs and \MainLemmaOne{} gives
\begin{equation}
\E[V(t{+}1)\mid x(t),S(t)=1]\le (1-\gamma)V(t).
\end{equation}
Taking expectation over \(S(t)\) yields
\begin{align}
\E[V(t{+}1)\mid x(t)]
&\le (1-q_{\mathrm{lb}}(p))\,V(t)+q_{\mathrm{lb}}(p)(1-\gamma)V(t)\nonumber\\
&=\bigl(1-\gamma q_{\mathrm{lb}}(p)\bigr)V(t).
\end{align}
Iterating completes the first inequality in \MainEqContraction.
The bound on the \(\varepsilon\)-consensus time follows by solving
\[
\bigl(1-\gamma q_{\mathrm{lb}}(p)\bigr)^t V(0)\le \varepsilon
\]
and using \(-\log(1-x)\ge x\) for \(x\in(0,1)\).

\section{Proof of \MainPropOne}

Differentiate \(\tilde q(p)\):
\begin{equation}
\tilde q'(p)=a e^{-(a+b)p}-b(1-e^{-ap})e^{-bp}.
\end{equation}
Setting \(\tilde q'(p)=0\) and canceling the common factor
\(e^{-bp}\) gives
\begin{equation}
ae^{-ap}=b(1-e^{-ap}),
\end{equation}
hence
\begin{equation}
e^{-ap}=\frac{b}{a+b},
\qquad
p^\star=\frac{1}{a}\log\!\Big(\frac{a+b}{b}\Big).
\end{equation}
Projection onto \([0,1]\) yields the closed-form rule in
\MainEqPstar. The dense-neighborhood approximation
\(1-e^{-ap}\approx 1\) reduces the main tradeoff to the familiar
throughput form \(p e^{-bp}\), which yields \(p^\star\approx 1/b\).

\section{Additional Remarks on the Main Results}

\begin{remark}[Interpretation]
Define \(K(r)\triangleq \pi r^2\theta^{2/\alpha}C(\alpha)\). Then
\MainEqPsucc{} becomes
\begin{equation}
p_{\mathrm{succ}}(r;p)=\exp(-\lambda p K(r)),
\end{equation}
revealing the exponential penalty in density \(\lambda\), access
probability \(p\), and squared link distance \(r^2\).
\end{remark}

\begin{remark}[Baseline character of \MainThmOne]
\MainThmOne{} is a standard success law for the PPP+Rayleigh
interference-limited baseline. More structured spatial models,
finite-window effects, additive noise, or richer fading laws
may modify the constants and in some cases the functional form.
The theorem should therefore be interpreted as a tractable
baseline reliability law rather than as a claim of universal
exactness for all UAV deployments. The numerical section below
evaluates how far the resulting baseline law remains predictive
once these assumptions are perturbed.
\end{remark}

\begin{remark}[Meaning of \(\gamma\)]
The quantity \(\gamma=1-\rho\) is an \emph{ideal-mixing}
parameter associated with the average-preserving update
abstraction in the absence of wireless failures. It depends
on the geometry, the neighborhood graph, the receiver-selection
rule, and the matching/averaging mechanism, but not on the
SIR success law itself. For a fixed deployment or ensemble,
\(\gamma\) can be estimated numerically, for example by
Monte Carlo averaging of \(W(t)\) or by measuring the empirical
contraction rate of the ideal no-outage process. Better-connected
and more mixing-efficient local update structures typically
yield larger \(\gamma\), whereas sparse or poorly connected
configurations yield smaller \(\gamma\). Thus, \(\gamma\)
captures the purely algorithmic/topological part of the
contraction mechanism, while the wireless layer enters through
the successful-update term below.
\end{remark}

\begin{remark}[Conservativeness of \(q_{\mathrm{lb}}(p)\)]
The bound in \MainEqQlb{} is intentionally conservative: it
uses the half-duplex listening factor, requires the existence
of at least one transmitter inside \(B(0,R)\), and then
evaluates decoding at the worst-case distance \(R\). It
therefore ignores shorter candidate links, the actual
neighbor-distance distribution, and the possibility that
multiple broadcasts could be decodable in the same slot.
Accordingly, \(q_{\mathrm{lb}}(p)\) should be interpreted as a
robust lower bound rather than as a numerically tight expression
for the true successful-update probability. The numerical
section below quantifies this conservativeness through
explicit-interference tightness and proxy-accuracy comparisons.
\end{remark}

\begin{remark}[Interpretation of \MainThmTwo]
The effective contraction factor is the product
\(\gamma\,q_{\mathrm{lb}}(p)\): the system mixes quickly when
(i) the ideal average-preserving update contracts disagreement
strongly (large \(\gamma\)), and
(ii) wireless decoding and medium access produce many
successful updates (large \(q_{\mathrm{lb}}(p)\)). \MainThmTwo{}
is therefore best interpreted as a guaranteed lower-bound
contraction statement under the adopted abstraction, rather than
as an exact prediction of the realized empirical contraction
curve.

If one wishes to account heuristically for unmodeled
coordination overhead in the matching step, the realized
successful-update rate may be viewed as further thinned by
a factor \(\eta_{\mathrm{ctl}}\in(0,1]\), as discussed in
Appendix~A of this supplement. In that case, the same proof
leads to the more conservative replacement
\(q_{\mathrm{lb}}(p)\mapsto \eta_{\mathrm{ctl}}\,q_{\mathrm{lb}}(p)\),
which preserves the same qualitative factorized interpretation
while worsening the constants.
\end{remark}

\begin{remark}[Proxy-optimal versus system-optimal]
The optimizer \(p^\star\) is exact for the simplified proxy
\(\tilde q(p)\) and is used here as an interpretable operating
rule. Its agreement with the empirically best access probability
depends on the quality of the lower-bound approximation and on
how well the chosen distance surrogate represents the links
effectively used by the algorithm. Thus, \(p^\star\) should be
read as a proxy-optimal access rule rather than as a universally
exact system-optimal choice. In particular, the simplified proxy
is most useful for identifying the correct scaling and a plausible
operating region, while the numerical section below compares
its prediction against explicit-interference experiments and
against refined empirical distance averaging.

A particularly natural surrogate is the empirical effective
distance
\[
r_{\mathrm{eff}} \triangleq \sqrt{\E[r^2]},
\]
because the PPP success exponent in \MainEqPsucc{} depends on
\(r^2\). Thus, second-moment distance averaging provides a
first-order way to summarize the realized neighbor-link
distribution when replacing the single worst-case distance
\(R\) by a more representative effective link length.
\end{remark}

\begin{remark}[Swarm scaling]
For fixed \((R,\theta,\alpha)\), \MainEqPstar{} gives
\(p^\star=\Theta(1/\lambda)\). Thus, as the swarm becomes
denser (larger \(\lambda\)), the proxy-optimal randomized
access probability decreases inversely with density, capturing
the fundamental interference--update tradeoff exposed by the
baseline model.
\end{remark}

\section{Full Numerical Validation, Proxy Accuracy, and Robustness Experiments}

This section has three complementary goals. First, we validate
the main interference--access tradeoff through simulations that
use \emph{explicit} instantaneous interference rather than
directly sampling packet success from the closed-form PPP law,
thereby avoiding circular validation of the physical-layer model.
Second, we examine how accurately the proposed proxy rules
predict a useful operating region for the Aloha probability \(p\),
and how conservative the lower-bound quantities in the analysis
are in practice. Third, we assess how robust the resulting
design message remains under changes in receiver abstraction,
additive noise, fading law, and spatial regularity.

Across all experiments, we report the disagreement ratio
\[
\varepsilon(T)\triangleq \frac{V(T)}{V(0)},
\]
where smaller values indicate faster contraction. Unless stated
otherwise, the geometry is kept fixed over the \(T\)-slot horizon,
which should be interpreted as a quasi-static snapshot
approximation appropriate for short control windows when the
displacement per slot is small relative to the communication
radius. In every experiment, the update abstraction remains the
same: each listening node selects one candidate transmitting
neighbor, successful receptions are mapped to a greedy matching,
and each matched pair performs a symmetric averaging step.
This keeps the state evolution average-preserving and makes the
comparison with the analytical framework meaningful.

\subsection{Explicit-interference validation under the PPP baseline}

We first validate the predicted access-probability tradeoff
through an \emph{explicit-interference} simulation that does not
use the closed-form success law of \MainThmOne{} to generate
packet outcomes. Instead, in each slot, the active transmitters
are sampled via slotted Aloha, each listening node selects one
transmitting neighbor within radius \(R\), the desired fading and
all interfering fadings are drawn explicitly, and decoding
succeeds only if the realized instantaneous SIR exceeds the
threshold \(\theta\). Successful receptions are then mapped to a
greedy matching so that the state update remains
average-preserving.

The nominal setup uses \(N=140\) nodes in the toroidal unit
square, communication radius \(R=0.10\), path-loss exponent
\(\alpha=4\), SIR threshold \(\theta=1\), and a horizon of
\(T=600\) slots. A connected finite-window PPP geometry is
generated by sampling i.i.d.\ uniform node locations until the
\(R\)-neighbor graph is connected, and the same geometry and
initialization are used across all values of \(p\) to ensure a fair
comparison. We sweep
\begin{multline}
\label{eq:p_grid_long}
p\in\{0.05,0.08,0.10,0.12,0.14,0.16,0.18,\\
0.20,0.22,0.24,0.26,0.28,0.30,0.35\},
\end{multline}
and report the geometric mean of \(\varepsilon(T)\) together
with \(95\%\) confidence intervals computed in the log domain.

Fig.~\ref{fig:eps_explicit_sir_supp} shows a clear unimodal
dependence of \(\varepsilon(T)\) on \(p\): for small \(p\),
contraction is slow because the medium is underused, whereas
for large \(p\), aggregate interference suppresses successful
receptions and the disagreement ratio increases again. The best
observed operating region is around \(p\approx 0.22\) to
\(0.24\). We also overlay the proxy-based predictions obtained
from the worst-case radius \(R\) and from the empirical effective
distance \(r_{\mathrm{eff}}\). For the realization shown, the
corresponding markers are approximately \(0.112\), \(0.145\),
\(0.184\), and \(0.284\). As expected, the worst-case-radius
proxy is conservative, while the \(r_{\mathrm{eff}}\)-based
closed-form rule moves substantially closer to the empirically
best operating region. Overall, this experiment provides a
direct validation of the main design message: the optimal access
probability is neither too small nor too large, but lies in an
intermediate interference-aware regime.

\begin{figure}[!t]
  \centering
  \includegraphics[width=\linewidth]{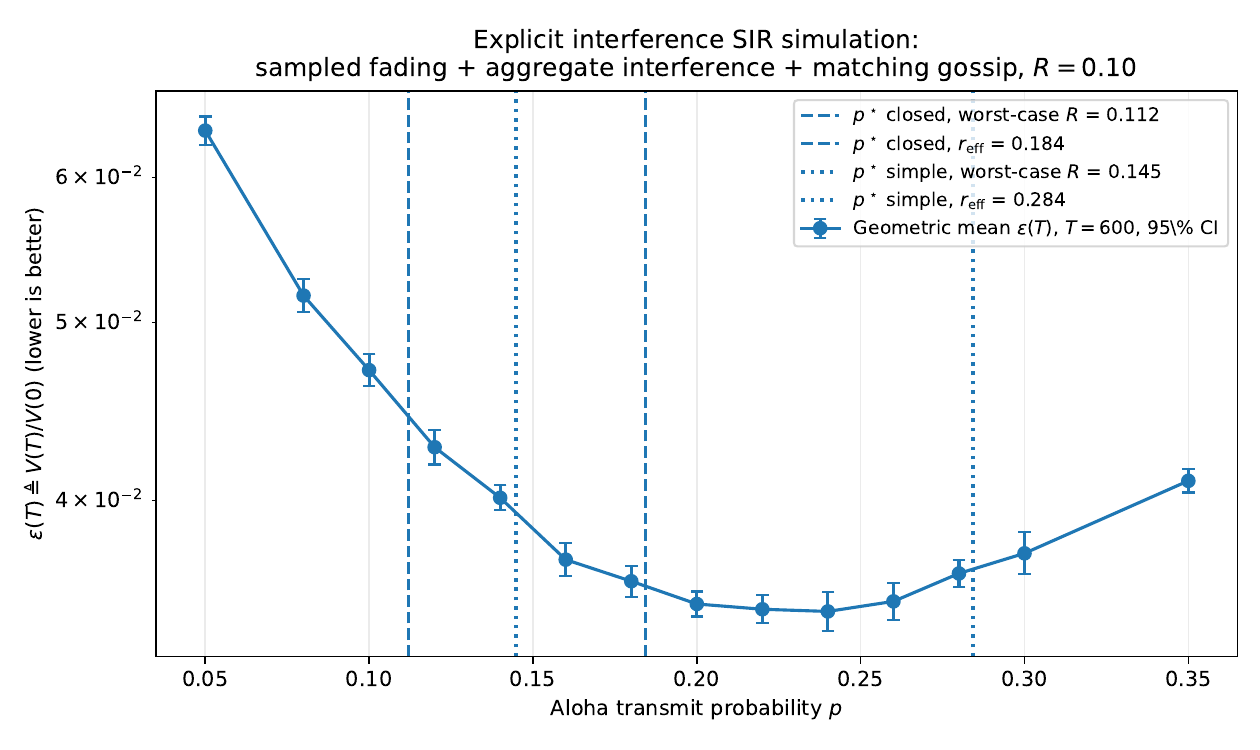}
  \caption{Explicit-interference SIR simulation under the PPP baseline:
  \(\varepsilon(T)=V(T)/V(0)\) versus Aloha transmit probability \(p\), with
  \(95\%\) confidence intervals. The empirical optimum lies in an intermediate
  range of \(p\), while the worst-case-radius proxy remains conservative and
  the \(r_{\mathrm{eff}}\)-based proxy provides a closer prediction.}
  \label{fig:eps_explicit_sir_supp}
\end{figure}

\subsection{Proxy-accuracy comparison: worst-case radius, empirical \(r_{\mathrm{eff}}\), and a refined empirical proxy}

We next quantify how accurately different proxy rules predict
the empirically best Aloha access probability under
explicit-interference simulation. The goal here is to clarify the
modeling gap between the link-level success expression of
\MainThmOne{} and the network-level design rule used to
select \(p\).

The geometry is the same connected finite-window PPP realization
used in the baseline experiment, with \(N=140\),
communication radius \(R=0.10\), path-loss exponent \(\alpha=4\),
SIR threshold \(\theta=1\), and horizon \(T=600\) slots. For
each value of \eqref{eq:p_grid_long}, we run the same
explicit-interference dynamics as in the baseline experiment,
and we again report the geometric mean of \(\varepsilon(T)\)
together with \(95\%\) confidence intervals.

We compare four predictors. The first uses the conservative
worst-case distance substitution \(r=R\) in the closed-form
proxy. The second replaces \(R\) by the empirical effective
distance
\[
r_{\mathrm{eff}} \triangleq \sqrt{\mathbb{E}[r^2]},
\]
computed over the realized neighbor-link distances. This is a
natural second-moment surrogate because the PPP success
exponent depends on \(r^2\). The third uses the
dense-neighborhood simplification of the same rule, evaluated
once with \(R\) and once with \(r_{\mathrm{eff}}\). Finally, we
introduce a refined empirical proxy
\begin{equation}
\label{eq:qbar_clean}
\bar q(p)
=
(1-p)\bigl(1-e^{-\lambda p\pi R^2}\bigr)
\cdot
\frac{1}{|\mathcal{E}|}
\sum_{r\in\mathcal{E}}
\exp\!\Bigl(-\lambda p\pi r^2 \theta^{2/\alpha} C(\alpha)\Bigr),
\end{equation}
where \(\mathcal{E}\) denotes the set of realized
neighbor-link distances in the sampled geometry. This refined
proxy keeps the same availability term as the bound-inspired
rule, but replaces the single worst-case distance by an empirical
average over the actual link distances available in the graph.

Fig.~\ref{fig:eps_proxy_accuracy_supp} shows that all
predictors capture the correct qualitative fact that the best
operating point lies in an intermediate range of \(p\), but their
numerical accuracy differs substantially. The
worst-case-radius predictors are clearly conservative, with
\(p^\star\) values near \(0.112\) and \(0.145\), well below the
empirically best region. Replacing \(R\) by \(r_{\mathrm{eff}}\)
moves the prediction toward the data, but the simple
\(r_{\mathrm{eff}}\)-based approximation overshoots the optimum.
By contrast, the refined empirical proxy selects
\(p^\star\approx 0.160\), which is visibly closer to the
descending part of the empirical curve and substantially reduces
the gap relative to the worst-case rule.

This experiment clarifies the main source of conservativeness
in the analytical design rule. The issue is not the existence of
the interference--access tradeoff itself, which is correctly
captured by all proxies, but rather the use of a single coarse
distance surrogate in place of the actual neighbor-distance
distribution. The refined empirical proxy therefore provides a
useful intermediate interpretation: it preserves the analytical
structure of the availability--reliability decomposition while
showing that a more faithful distance averaging can materially
improve the numerical prediction of a good operating region,
even if it does not exactly recover the empirically best
contraction point.

\begin{figure}[!t]
  \centering
  \includegraphics[width=\linewidth]{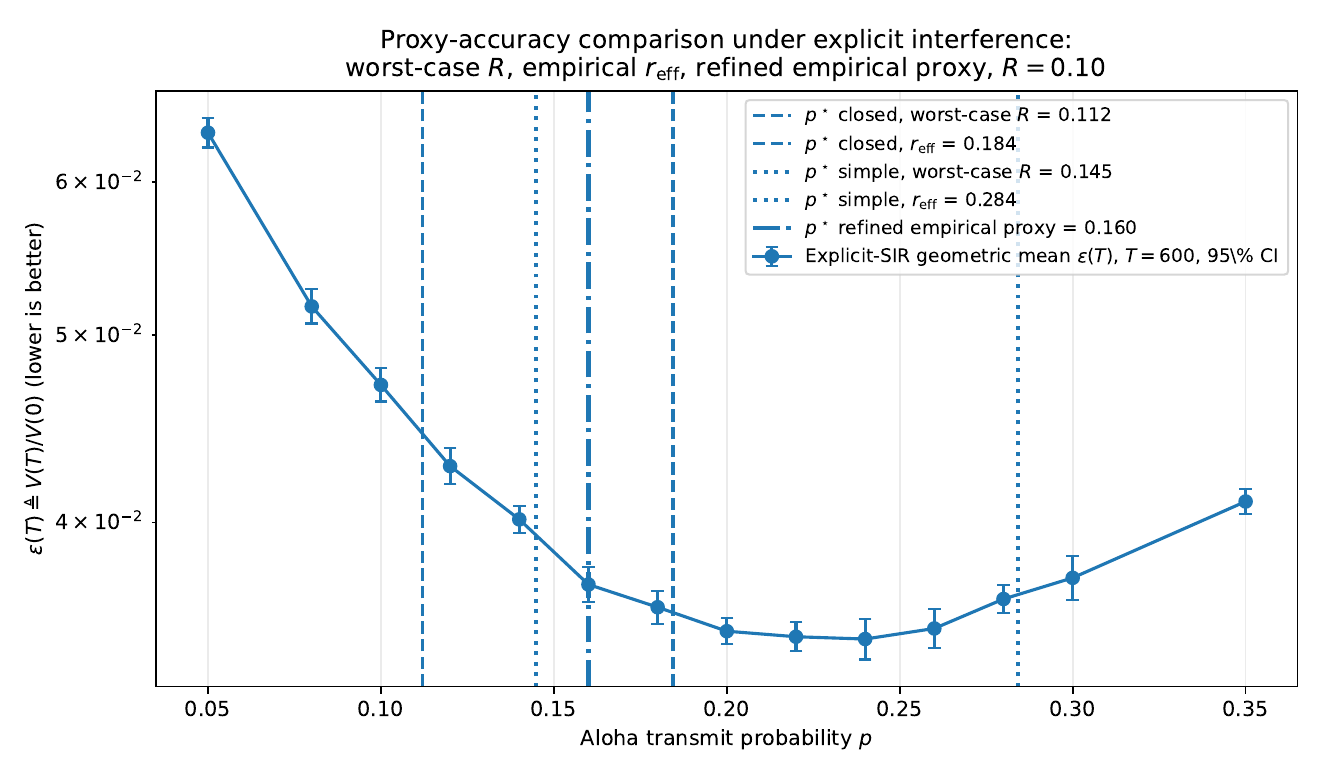}
  \caption{Proxy-accuracy comparison under explicit interference:
  \(\varepsilon(T)=V(T)/V(0)\) versus Aloha transmit probability \(p\), with
  vertical markers corresponding to the worst-case-radius rule, the
  \(r_{\mathrm{eff}}\)-based rule, and the refined empirical proxy. All
  predictors identify the correct intermediate operating regime, while the
  refined empirical proxy gives the closest numerical prediction.}
  \label{fig:eps_proxy_accuracy_supp}
\end{figure}

\subsection{Direct tightness study of the successful-update lower bound}

To quantify the conservativeness of the per-slot
successful-update lower bound \(q_{\mathrm{lb}}(p)\) in
\MainThmTwo, we compare it directly against an empirical
node-level quantity measured from the explicit-interference
simulator. The goal of this experiment is not to claim that
\(q_{\mathrm{lb}}(p)\) is numerically tight, but rather to verify
that it indeed behaves as a conservative lower bound for the
successful-update mechanism induced by the matching-based
abstraction, and to assess how much the prediction improves
when the worst-case distance substitution is replaced by an
empirical distance average.

The setup is the same connected finite-window PPP baseline
used in the main explicit-interference experiment: \(N=140\)
nodes in the toroidal unit square, communication radius
\(R=0.10\), path-loss exponent \(\alpha=4\), SIR threshold
\(\theta=1\), and horizon \(T=600\) slots. For each value of
\eqref{eq:p_grid_long}, we run \(20\) Monte-Carlo trials. In
every slot, transmitters are generated by slotted Aloha, each
listener selects one transmitting neighbor within range \(R\),
fading and aggregate interference are sampled explicitly,
successful decoding is determined by the instantaneous SIR rule,
and successful proposals are filtered through the same greedy
matching used throughout the paper.

The empirical quantity reported in
Fig.~\ref{fig:qlb_tightness_supp} is the \emph{typical-node
matched-receiver probability}, namely the average fraction of
nodes that, in a given slot, are listening nodes participating in
an executed matched successful reception. This is the
appropriate node-level counterpart of the successful-update
probability appearing in the contraction bound. We compare it
against two analytical surrogates: the conservative lower bound
\[
q_{\mathrm{lb}}(p)
=
(1-p)\bigl(1-e^{-\lambda p\pi R^2}\bigr)e^{-\lambda pK(R)},
\]
and the refined empirical proxy \eqref{eq:qbar_clean}, where
\(\mathcal{E}\) is the set of realized neighbor-link distances in
the sampled geometry.

Fig.~\ref{fig:qlb_tightness_supp} shows three clear features.
First, \(q_{\mathrm{lb}}(p)\) remains below the empirical
node-level successful-update probability over the full tested
range of \(p\), confirming that the bound is indeed conservative
for the relevant node-level event. Second, the refined empirical
proxy \(\bar q(p)\) is substantially less conservative and tracks
the empirical curve more closely in both magnitude and shape,
especially in the low-to-moderate \(p\) regime. Third, the
maximizers differ noticeably: the empirical node-level
successful-update rate is largest around
\(p_{\mathrm{emp,upd}}^\star \approx 0.30\), whereas
\(q_{\mathrm{lb}}(p)\) and \(\bar q(p)\) are maximized near
\(0.10\) and \(0.16\), respectively. Thus, the conservative
analytical surrogates correctly identify the qualitative
interference--access tradeoff, but they underpredict the
maximizing \(p\) for this empirical node-level metric. This
node-level maximizer should not be conflated with the
empirically fastest-mixing \(p\) for the disagreement metric
\(\varepsilon(T)\); rather, it isolates one particular ingredient
of the contraction mechanism.

This experiment therefore refines the interpretation of the
theory. The role of \(q_{\mathrm{lb}}(p)\) is to provide a
guaranteed lower-bound contribution to the contraction
mechanism, not to act as an exact numerical fit to the realized
successful-update probability. At the same time, replacing the
worst-case distance \(R\) by an empirical average over the
actual neighbor-link lengths materially narrows the gap, which
helps explain why refined empirical proxies produce more
accurate operating-region predictions than the purely
worst-case rule.

\begin{figure}[!t]
  \centering
  \includegraphics[width=\linewidth]{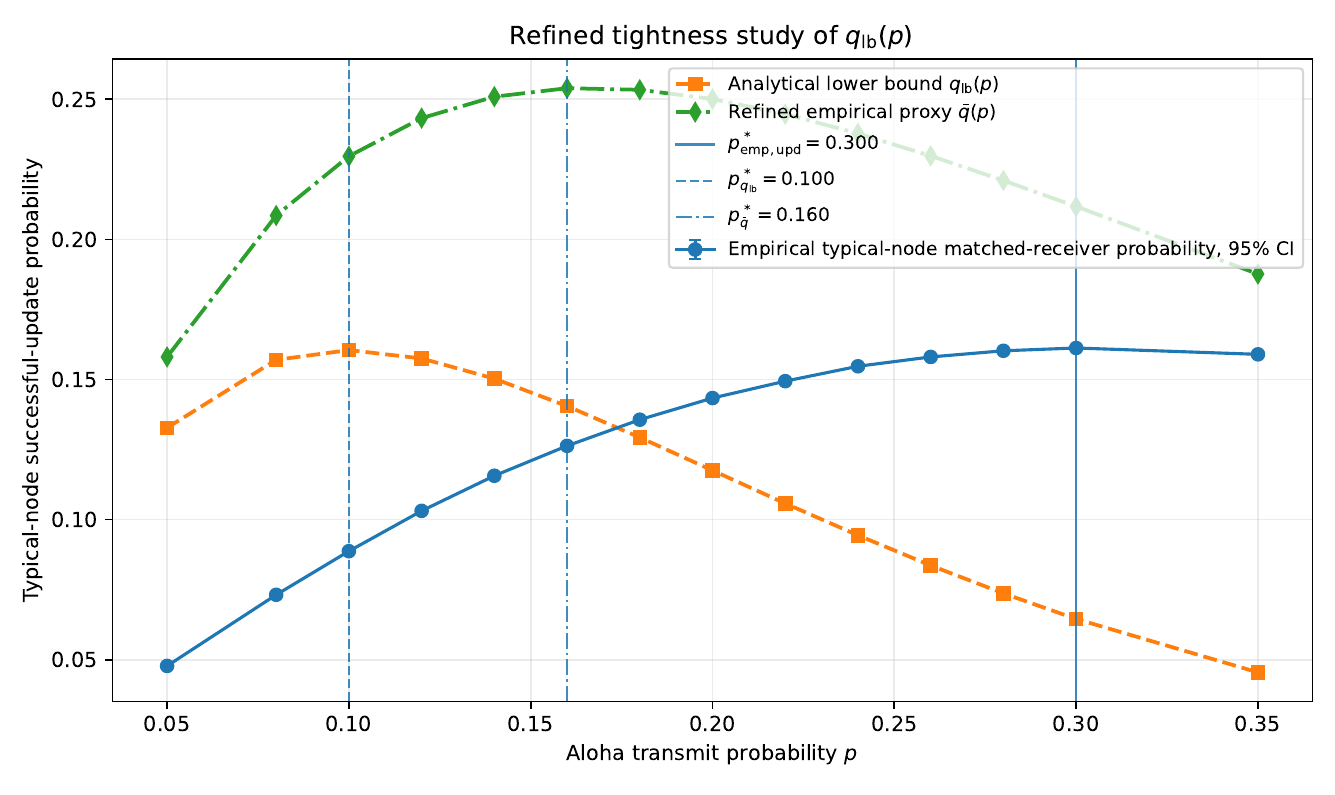}
  \caption{Refined tightness study of \(q_{\mathrm{lb}}(p)\):
  empirical typical-node matched-receiver probability versus Aloha transmit
  probability \(p\), compared with the analytical lower bound
  \(q_{\mathrm{lb}}(p)\) and the refined empirical proxy \(\bar q(p)\).
  The analytical lower bound is conservative throughout, while the refined
  empirical proxy tracks the empirical trend more closely.}
  \label{fig:qlb_tightness_supp}
\end{figure}

\subsection{Empirical interpretation of the ideal-mixing parameter \texorpdfstring{$\gamma$}{gamma}}

To make the role of the ideal-mixing parameter \(\gamma\) in
\MainThmTwo{} more concrete, we consider an \emph{ideal
no-outage} experiment in which the wireless decoding layer is
removed entirely and every proposed exchange succeeds. The
purpose of this experiment is to isolate the
\emph{algorithmic/topological} part of the contraction mechanism
from the wireless-thinning effect. In particular, the resulting
estimate should be interpreted as an empirical counterpart of
the ideal-mixing quantity \(\gamma=1-\rho\) introduced in
\MainLemmaOne.

The update mechanism remains otherwise unchanged. In each
slot, nodes are split into transmitters and listeners through
slotted Aloha with probability \(p\), each listener selects one
transmitting neighbor according to the same single-neighbor
receiver abstraction used in the baseline model, and the
resulting proposals are mapped to a greedy matching. The only
difference with respect to the full wireless simulation is that
every proposed matched exchange is executed, so that no
SIR-related failures occur. Hence, this experiment keeps the
same random matching structure induced by the MAC and
receiver rule, while removing the wireless success/failure layer.

We compare two geometry ensembles: a PPP ensemble and a
perturbed-lattice ensemble. To make the comparison as
controlled as possible, each realization is required to be
connected and approximately degree-matched to a target mean
degree \(5.5\), with tolerance \(\pm 0.2\). We use \(N=140\),
\(L=1\), a horizon \(T=350\), \(10\) independent geometry
realizations per ensemble, and \(6\) Monte-Carlo runs per
geometry. For each operating point \(p\), we estimate the
one-step ideal contraction ratio through
\begin{equation}
\hat{\rho}(p)
\triangleq
\frac{1}{T}\sum_{t=0}^{T-1}\frac{V(t+1)}{V(t)},
\qquad
\hat{\gamma}(p)\triangleq 1-\hat{\rho}(p).
\end{equation}

Fig.~\ref{fig:gamma_ideal_no_outage_supp} shows that the
perturbed-lattice ensemble has a consistently larger
\(\hat{\gamma}(p)\) than the PPP ensemble across the full tested
range of \(p\), even though the two ensembles are essentially
matched in mean degree. This is informative in two ways. First,
it confirms that \(\gamma\) is not merely a formal parameter in
the theorem, but a numerically observable quantity capturing
the strength of the ideal average-preserving mixing mechanism.
Second, it shows that more regular spatial structure can improve
the ideal-mixing part of the dynamics, independently of
wireless decoding failures.

It is also worth emphasizing that \(\hat{\gamma}(p)\) remains
\(p\)-dependent in this ideal experiment. This does not
contradict the interpretation of \(\gamma\) as an algorithmic
quantity: although wireless outages have been removed, the
Aloha transmit/listen split still changes the random update
structure and hence the effective matching process. In other
words, the present experiment isolates the \emph{non-wireless}
part of the contraction while still retaining the MAC-induced
randomness of the average-preserving broadcast-gossip
abstraction.

Overall, this experiment complements the explicit-interference
results of the preceding subsections. The explicit-SIR
simulations show how wireless failures shape the overall
operating point, whereas the present no-outage study clarifies
how the ideal-mixing term itself varies with geometry and
access probability. This supports the factorized interpretation of
\MainThmTwo, in which the realized contraction depends jointly
on an ideal-mixing component \(\gamma\) and a wireless
successful-update component.

\begin{figure}[!t]
  \centering
  \includegraphics[width=\linewidth]{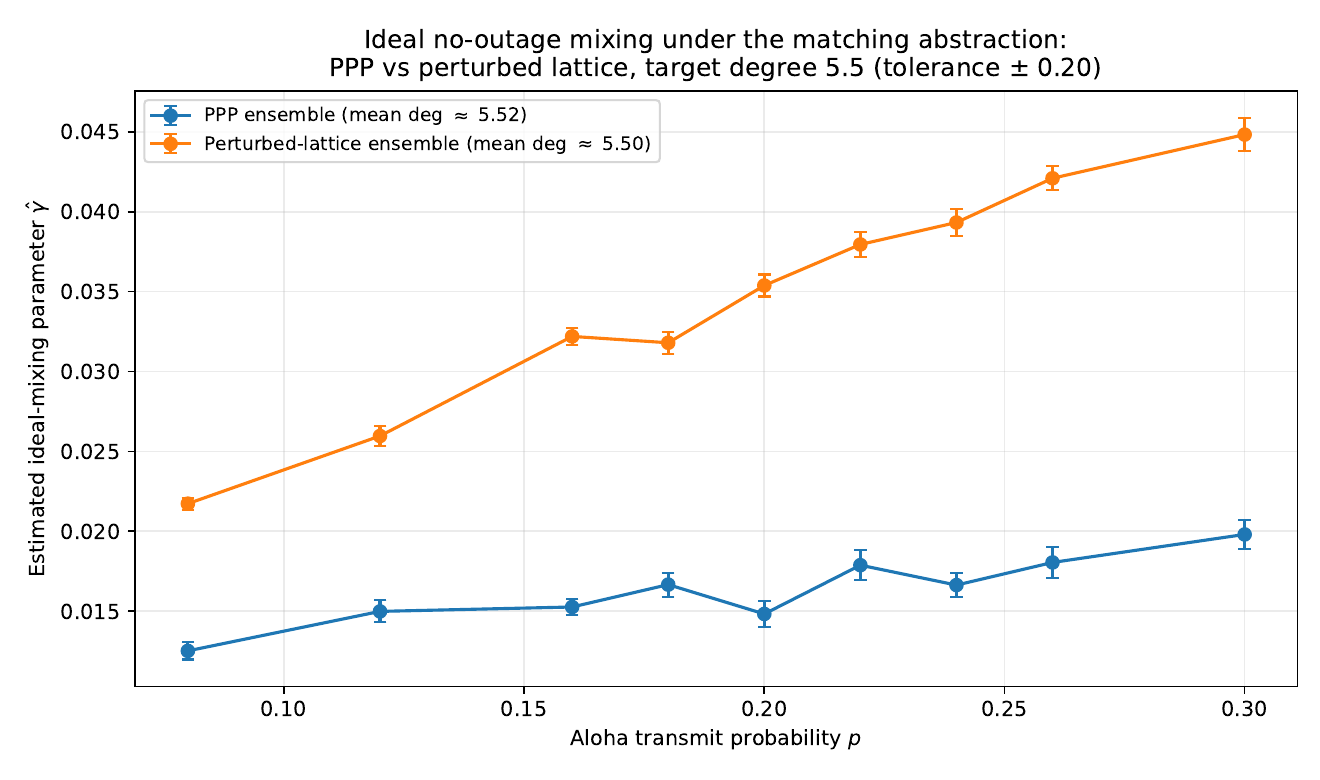}
  \caption{Ideal no-outage estimate of the mixing parameter:
  \(\hat{\gamma}(p)\) for PPP and perturbed-lattice ensembles under the same
  matching-based update abstraction, with connected and approximately
  degree-matched realizations. The more regular perturbed-lattice geometry
  yields a consistently larger ideal-mixing parameter, indicating that spatial
  structure affects the algorithmic/topological part of the contraction even
  before wireless outages are introduced.}
  \label{fig:gamma_ideal_no_outage_supp}
\end{figure}

\subsection{Receiver-abstraction robustness under explicit interference}

The analytical model adopts a conservative one-link receiver
abstraction in which each listening node selects a single
transmitting neighbor and attempts to decode only that link. To
assess whether the main design conclusion depends strongly on
the specific choice of receiver rule, we compare two variants
under the same explicit-interference simulation:
(i) the baseline rule in which a listener selects one active
transmitting neighbor uniformly at random, and
(ii) a more structured rule in which the listener selects the
nearest active transmitter.

The geometry and physical-layer assumptions are the same as in
the explicit-SIR baseline: a connected finite-window PPP
realization with \(N=140\) nodes in the toroidal unit square,
communication radius \(R=0.10\), path-loss exponent
\(\alpha=4\), SIR threshold \(\theta=1\), and horizon \(T=600\)
slots. For each value of \eqref{eq:p_grid_long}, we perform
\(20\) Monte-Carlo runs and report the geometric mean of
\[
\varepsilon(T)\triangleq \frac{V(T)}{V(0)}
\]
together with \(95\%\) confidence intervals computed in the
log domain.

Fig.~\ref{fig:eps_receiver_rule_robustness_supp} shows that
both receiver rules produce the same qualitative behavior: the
disagreement ratio is clearly unimodal in \(p\), and the best
operating region remains intermediate rather than shifting to
either extreme. In both cases, the best observed operating point
occurs at \(p\approx 0.24\). Quantitatively, the uniform-choice
rule achieves a slightly smaller minimum disagreement ratio,
\[
\varepsilon(T)\approx 0.0348
\quad \text{at } p=0.24,
\]
whereas the nearest-active-transmitter rule yields
\[
\varepsilon(T)\approx 0.0367
\quad \text{at } p=0.24.
\]
Thus, changing the receiver rule modifies the constants but
does not alter the central access-design conclusion.

The fact that the nearest-transmitter rule is not superior in
this setting is also informative. Although choosing the nearest
active transmitter improves the quality of the selected wireless
link, it can also concentrate many listeners onto the same
nearby transmitters. Under the matching abstraction, each
transmitter can participate in at most one executed averaging
exchange per slot, so this concentration may reduce the
effective utilization of successful proposals. By contrast, the
uniform-choice rule spreads proposals more evenly across active
transmitters, which can improve the final matching yield even
if some selected links are, on average, weaker. This robustness
check therefore does not validate the one-link abstraction as
physically exact, but it does show that the qualitative
access-design conclusion is not fragile to a more structured
receiver-selection rule.

\begin{figure}[!t]
  \centering
  \includegraphics[width=\linewidth]{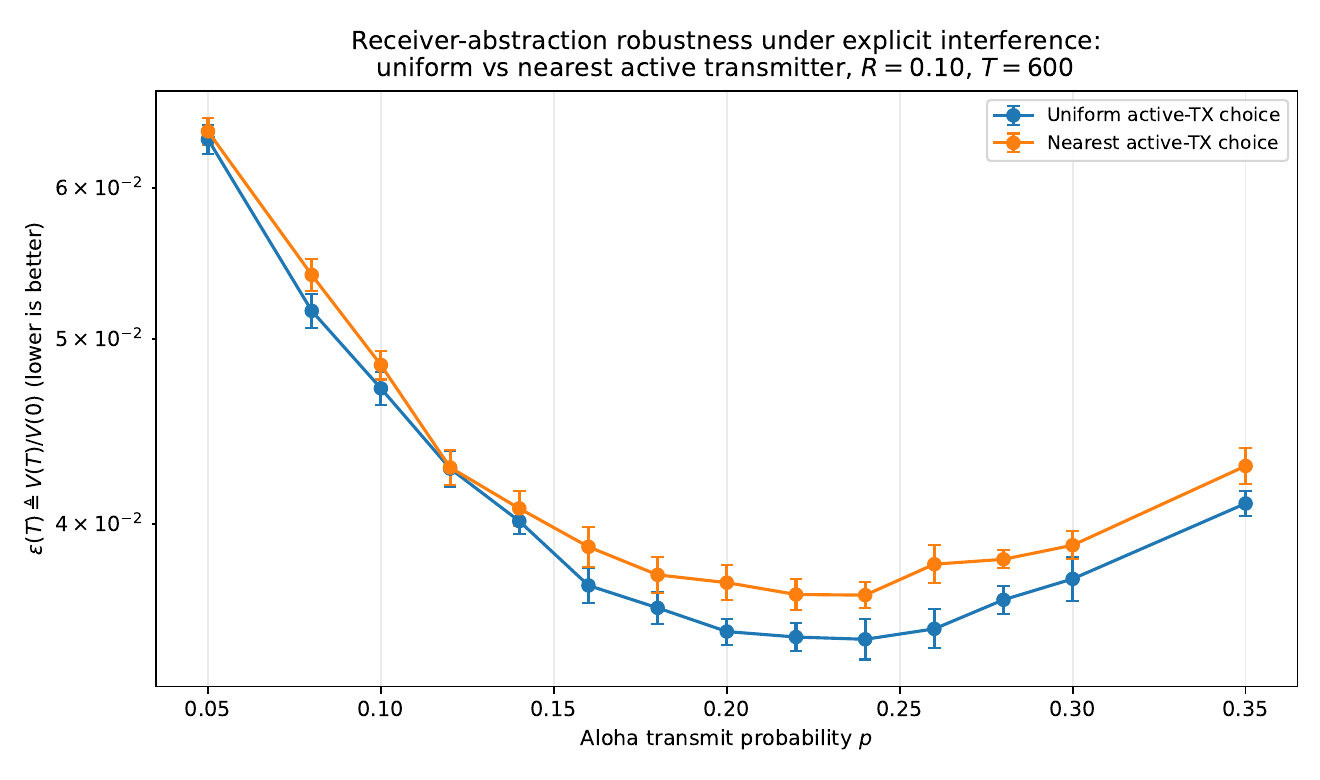}
  \caption{Receiver-abstraction robustness under explicit interference:
  comparison between uniform active-transmitter selection and nearest
  active-transmitter selection. Both receiver rules exhibit the same unimodal
  dependence on \(p\) and the same intermediate best operating region, showing
  that the access-design message is robust to this change in receiver
  abstraction.}
  \label{fig:eps_receiver_rule_robustness_supp}
\end{figure}

\subsection{SINR robustness: explicit noise sensitivity beyond the interference-limited baseline}

To examine whether the main design conclusion depends
critically on the interference-limited approximation, we repeat
the explicit physical-layer simulation with \emph{additive noise}
and compare the resulting signal-to-interference-plus-noise
ratio (SINR) dynamics against the original SIR baseline. This
experiment is not intended as a complete communication-link
parameterization with carrier frequency, bandwidth, receiver
noise figure, and antenna modeling. Rather, it serves as a
normalized robustness check showing how the contraction
behavior changes when a nonzero noise floor is introduced into
the decoding rule.

The geometry and update abstraction are kept identical to the
explicit-SIR baseline: \(N=140\) nodes are placed in a connected
finite-window PPP realization on the toroidal unit square, the
communication radius is \(R=0.10\), the path-loss exponent is
\(\alpha=4\), the decoding threshold is \(\theta=1\), and the
horizon is \(T=600\) slots. In each slot, transmitters are
selected independently with probability \(p\), each listening
node chooses one transmitting neighbor within range \(R\),
desired and interfering fading coefficients are sampled
explicitly, and successful receptions are mapped to a greedy
matching so that the state update remains average-preserving.

The only difference relative to the SIR experiment is that packet
reception is now decided by the explicit SINR rule
\begin{equation}
\mathrm{SINR}_{j\to i}
=
\frac{h_{ji}\,r_{ij}^{-\alpha}}
{\sum_{k\in\mathcal{T}\setminus\{j\}} h_{ki}\,r_{ki}^{-\alpha} + \sigma^2},
\end{equation}
with success declared whenever \(\mathrm{SINR}_{j\to i}>\theta\).
We compare three normalized noise levels,
\[
\sigma^2 \in \{0,\;10^{-3},\;10^{-2}\},
\]
where \(\sigma^2=0\) corresponds to the original SIR baseline.
As before, we report the geometric mean of \(\varepsilon(T)\)
with \(95\%\) confidence intervals computed in the log domain.

Fig.~\ref{fig:eps_sinr_robustness_supp} shows that the three
curves remain very close over the full range of \(p\). In
particular, the performance remains clearly unimodal, and the
best operating region stays in the same intermediate range as
in the interference-limited case. Mild additive noise therefore
does not qualitatively alter the access-probability tradeoff:
when \(p\) is too small, the system suffers from insufficient
update opportunities, whereas when \(p\) is too large, the
combined effect of interference and noise again slows
contraction. The small separation between the three curves
indicates that, for the present normalized regime, the dominant
mechanism remains interference rather than receiver noise.

This experiment should be interpreted as a robustness check
rather than as a claim that the system is universally
interference-limited in all UAV settings. Its main message is
that the proposed MAC-level tuning rule remains stable when
the decoding model is perturbed from SIR to SINR by a
reasonable nonzero noise term. Thus, the preference for an
intermediate access probability is not an artifact of setting the
thermal-noise term exactly to zero.

\begin{figure}[!t]
  \centering
  \includegraphics[width=\linewidth]{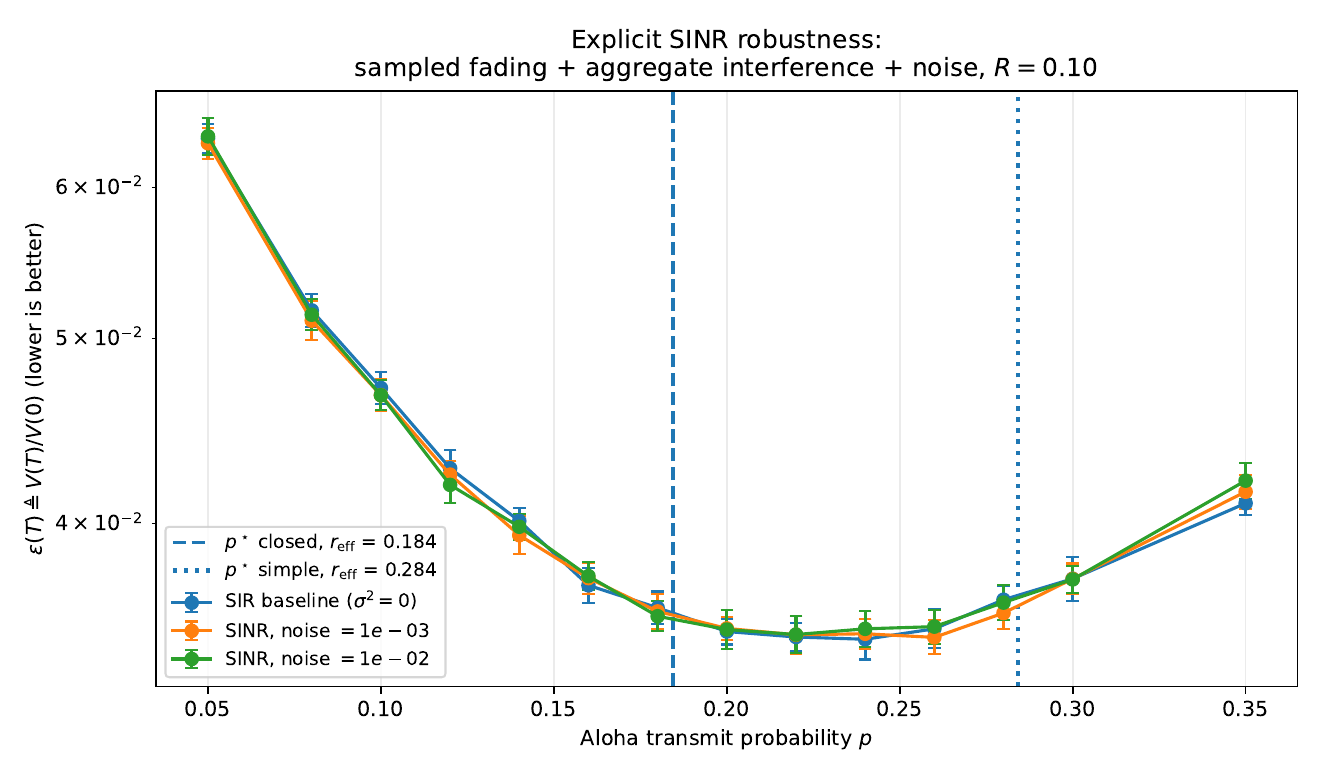}
  \caption{Explicit SINR robustness: \(\varepsilon(T)=V(T)/V(0)\) versus
  Aloha transmit probability \(p\) under three noise levels,
  including the SIR baseline \((\sigma^2=0)\). The favorable operating region
  remains essentially unchanged, indicating that the design trend is robust to
  mild additive noise.}
  \label{fig:eps_sinr_robustness_supp}
\end{figure}

\subsection{Fading robustness: explicit interference beyond the Rayleigh baseline}

The analytical development adopts Rayleigh fading for
tractability, but UAV links may exhibit milder small-scale
fluctuations than a pure Rayleigh model. To test whether the
main access-probability design trend depends strongly on this
assumption, we repeat the explicit-interference simulation under
several fading laws while keeping the geometry, MAC rule, and
averaging abstraction unchanged.

The nominal setup is the same as in the explicit-SIR baseline:
\(N=140\) nodes are placed in a connected finite-window PPP
realization on the toroidal unit square, the communication
radius is \(R=0.10\), the path-loss exponent is \(\alpha=4\),
the SIR threshold is \(\theta=1\), and the horizon is \(T=600\)
slots. In each slot, transmitters are selected independently with
probability \(p\), each listening node chooses one transmitting
neighbor within range \(R\), the desired and interfering channel
gains are sampled explicitly, and successful receptions are
mapped to a greedy matching so that the update remains
average-preserving.

We compare three fading models with unit mean power gain:
Rayleigh fading, corresponding to an exponential gain
distribution, and two Nakagami-\(m\) alternatives with \(m=2\)
and \(m=3\). The latter provide progressively less severe fading
fluctuations and may be viewed as a simple robustness check
against more regular channel conditions. As in the previous
experiments, we report \(\varepsilon(T)\) through its geometric
mean over Monte-Carlo trials together with \(95\%\) confidence
intervals computed in the log domain.

Fig.~\ref{fig:eps_fading_robustness_supp} shows that the
dependence of \(\varepsilon(T)\) on \(p\) remains clearly
unimodal for all three fading models. The empirically best
operating region stays in the same intermediate range,
approximately \(p\approx 0.20\) to \(0.24\), and the differences
between Rayleigh and Nakagami-\(m\) are moderate over the
full sweep. Around the best operating region, the Nakagami
cases yield slightly smaller disagreement ratios, while at larger
access probabilities the curves remain close and preserve the
same qualitative degradation trend. Thus, the preference for an
intermediate Aloha probability is not tied to the Rayleigh
assumption alone.

This experiment should be interpreted as a sensitivity check
rather than as a complete air-to-air channel model. Its role is
to show that the MAC-level tuning message of the paper is
robust to replacing the baseline Rayleigh law by milder fading
distributions. In particular, the proposed operating rule
continues to identify the correct design region even when the
small-scale fading statistics are perturbed away from the
nominal analytical model.

\begin{figure}[!t]
  \centering
  \includegraphics[width=\linewidth]{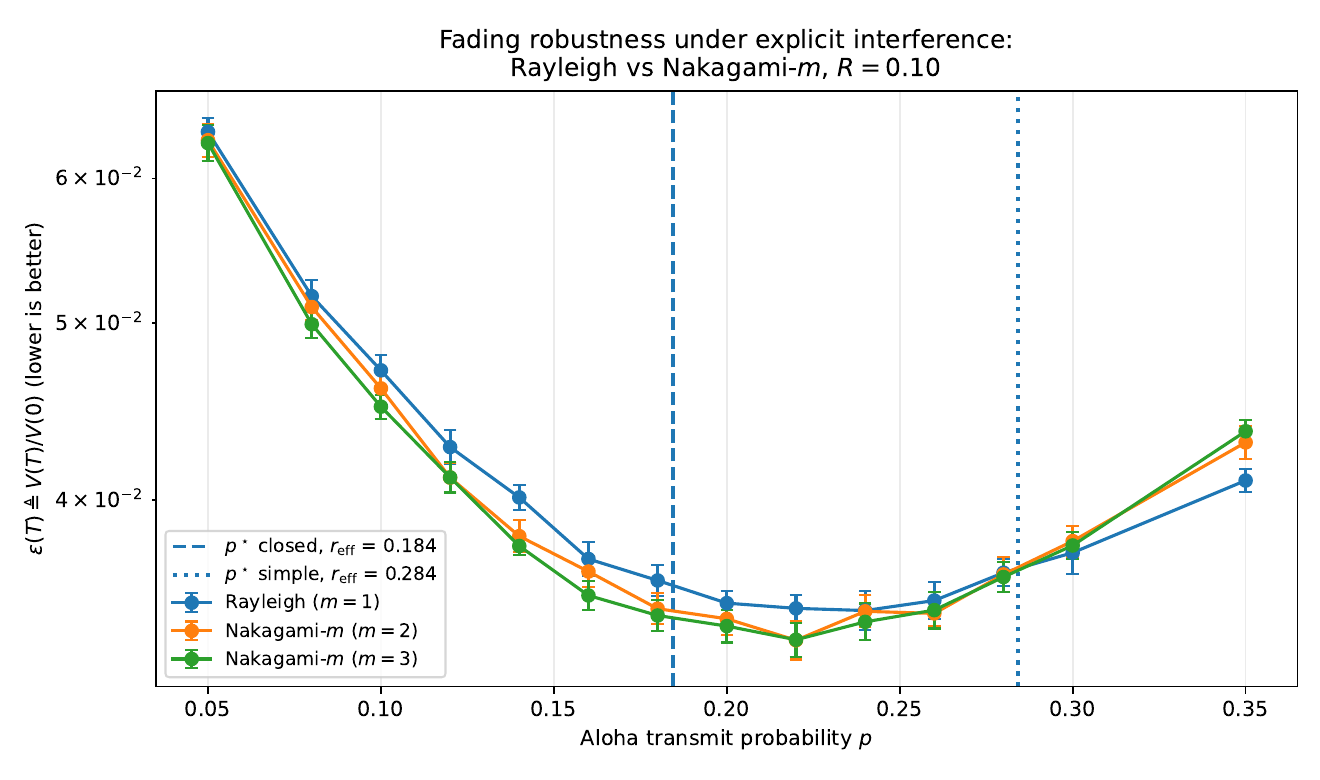}
  \caption{Fading robustness under explicit interference:
  \(\varepsilon(T)=V(T)/V(0)\) versus Aloha transmit probability \(p\) for
  Rayleigh and Nakagami-\(m\) fading models. The favorable operating region
  remains essentially unchanged, indicating that the main design trend is
  robust to moderate variations in the fading law.}
  \label{fig:eps_fading_robustness_supp}
\end{figure}

\subsection{Spatial robustness: PPP versus perturbed-lattice ensembles under matched connectivity constraints}

To assess how strongly the observed access-probability tradeoff
depends on the PPP placement model, we compare two
\emph{geometry ensembles} under the same explicit-interference
simulation rule: a PPP ensemble and a perturbed-lattice
ensemble. The objective is not to claim that these two models
exhaust all relevant swarm formations, but rather to test
whether the qualitative design trend in \(p\) persists when the
node layout is made substantially more regular.

For each geometry realization, we first construct an
\(R\)-neighbor graph by selecting the smallest radius that
ensures connectivity while also targeting a prescribed mean
degree. This produces connected graphs with broadly
comparable neighborhood density across the two ensembles,
thereby reducing the extent to which the comparison is driven
by trivial sparsity effects. The perturbed lattice is obtained by
jittering a regular grid, while the PPP baseline is generated by
i.i.d.\ uniform node placements. Because the connectivity
constraint is enforced realization by realization, the degree
matching is only approximate in finite samples. The comparison
should therefore be interpreted as a robustness check under
approximately matched local density rather than as a perfectly
degree-controlled experiment.

The explicit physical-layer simulation then proceeds exactly as
in the SIR baseline: at each slot, nodes transmit independently
with probability \(p\), each listener chooses one transmitting
neighbor, Rayleigh fading is sampled explicitly on the desired
and interfering links, decoding succeeds if the instantaneous
SIR exceeds \(\theta\), and successful receptions are mapped to a
greedy matching so that the update remains average-preserving.
We use \(N=140\), \(L=1\), \(\alpha=4\), \(\theta=1\), a horizon
\(T=350\), and evaluate
\[
p\in\{0.08,0.12,0.16,0.18,0.20,0.22,0.24,0.26,0.30\}.
\]
For each ensemble, results are aggregated over \(10\)
independent geometry realizations and \(6\) Monte-Carlo runs
per geometry.

Fig.~\ref{fig:eps_spatial_robustness_supp} shows that both
ensembles still favor an intermediate Aloha probability, so the
non-monotone dependence on \(p\) is not specific to the PPP
baseline. At the same time, the absolute performance level
differs markedly across the two spatial models: the
perturbed-lattice ensemble achieves substantially smaller
disagreement ratios than the PPP ensemble over the entire range
of \(p\). This indicates that spatial regularity can materially
improve wireless averaging performance by reducing
unfavorable geometry fluctuations, even when connectivity and
local density are only approximately matched. Accordingly, the
observed performance gap should be read primarily as a
robustness indicator showing that the constants are
geometry-sensitive, not as a claim of exact degree-controlled
superiority.

This experiment therefore supports a nuanced interpretation of
the theory. The PPP model remains useful as a tractable
irregular-deployment baseline and correctly captures the
existence of an interference--access tradeoff, but the constants
governing the achievable contraction can change significantly
under more structured formations. In particular, the preferred
\emph{region} of \(p\) is stable, whereas the absolute
contraction level is geometry-sensitive. That distinction is
important for interpreting the closed-form rule as a baseline
design heuristic rather than a universal exact optimizer across
all swarm layouts.

\begin{figure}[!t]
  \centering
  \includegraphics[width=\linewidth]{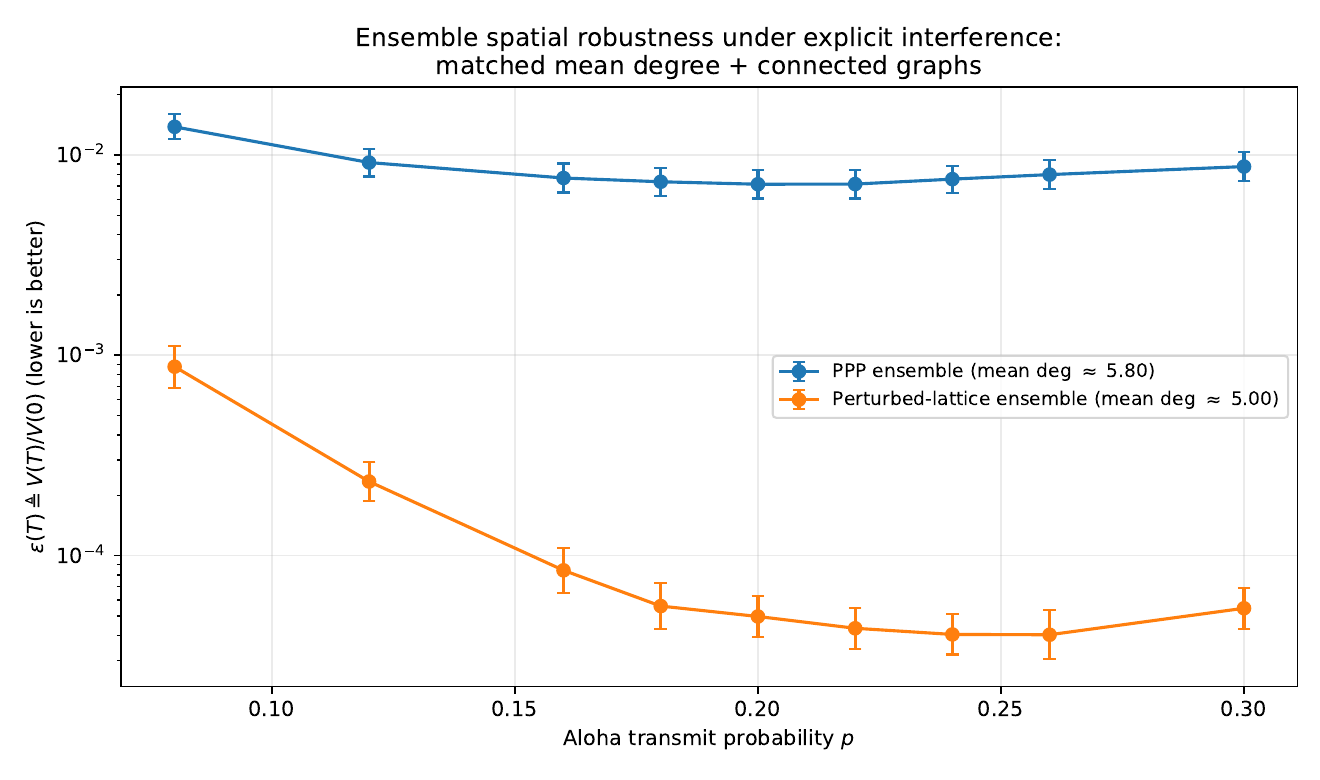}
  \caption{Ensemble spatial robustness under explicit interference:
  PPP and perturbed-lattice geometries with approximately matched connectivity
  and mean degree. Both ensembles retain an intermediate optimal region in
  \(p\), but the more regular perturbed-lattice geometry yields substantially
  smaller disagreement ratios, showing that spatial regularity affects the
  constants even when the qualitative tradeoff persists.}
  \label{fig:eps_spatial_robustness_supp}
\end{figure}

\subsection{Summary of numerical findings}

Taken together, the experiments support a coherent
interpretation of the analytical framework. The
explicit-interference baseline confirms the predicted
non-monotone dependence on the access probability \(p\). The
proxy and tightness studies show that the worst-case
lower-bound quantities are conservative, while empirical
distance averaging improves numerical prediction. The ideal
no-outage experiment clarifies the role of the ideal-mixing
parameter \(\gamma\), and the receiver-rule, SINR, fading, and
spatial robustness studies show that the central access-design
message is stable under several meaningful deviations from the
nominal baseline. The main conclusion of this section is
therefore not that the analytical surrogates are numerically
exact, but that they correctly capture the underlying
interference--mixing tradeoff and provide a tractable guideline
for identifying an intermediate Aloha operating region.

\end{document}